\documentclass[a4paper,USenglish,cleveref,autoref]{lipics-v2021}

\hideLIPIcs  

\nolinenumbers

\usepackage{style}
\title{Safe Exploration of Arbitrary Dynamic Dangerous Networks}
\author{Caterina Feletti}{School of Computer Science, Carleton University, Ottawa, Canada {\and} School of Electrical Engineering and Computer Science, University of Ottawa, Canada}{caterinafeletti@cunet.carleton.ca}{https://orcid.org/0009-0004-1813-8056}{}
\author{Paola Flocchini}{School of Electrical Engineering and Computer Science, University of Ottawa, Canada}{paola.flocchini@uottawa.ca}{https://orcid.org/0000-0003-3584-5727}{}
\author{Giuseppe Prencipe}{Department of Computer Science, Università di Pisa, Italy}{giuseppe.prencipe@unipi.it}{https://orcid.org/0000-0001-5646-7388}{}
\author{Nicola Santoro}{School of Computer Science, Carleton University, Canada}{santoro@scs.carleton.ca}{https://orcid.org/0000-0002-7954-3918}{}
\authorrunning{C. Feletti, P. Flocchini, G. Prencipe, and N. Santoro}
\Copyright{Caterina Feletti, Paola Flocchini, Giuseppe Prencipe, and Nicola Santoro}
\ccsdesc{Theory of computation~Distributed algorithms}
\keywords{Mobile agents, Safe exploration, Black holes, Rotor-router, Port marking}

\begin{document}

\maketitle
    \begin{abstract}
    Given a team of agents on the nodes of a graph-based network, the exploration problem requires each node to be visited by at least one agent.
    In the classical distributed setting of static networks, agents do not know the topology of the network; in the more recently investigated setting of dynamic networks, they may have prior knowledge about the graph class (e.g., trees, tori, rings) or other parameters (e.g., number of nodes).
    Gotoh \emph{et al.} (2021) are the first to study the exploration problem of \emph{arbitrary dynamic networks}, under the necessary minimal assumption that any two nodes will be connected by a temporal path infinitely often (\emph{temporal connectivity} assumption). 

    In this paper, we extend their study of exploration under temporal connectivity by considering \emph{dangerous} dynamic networks, i.e., containing possibly one or more black holes. Whenever an agent enters a black hole, it will be trapped forever. 
    The \emph{safe exploration} problem requires a team to explore all the safe nodes, ensuring that at least one agent will never be trapped in a black hole.
    We first prove that, given the necessary (and sufficient) number of agents, a team of oblivious agents can \emph{perpetually} explore the safe nodes without any prior knowledge of the network or the team, without agent or node IDs, under semi-synchronous schedulers. 
    Then, we provide an algorithm that enables agents to safely explore the network and \emph{terminate}.
    In this case, agents are equipped with unique IDs and persistent memory, and they know the number of safe nodes.
    Yet, in both cases, we prove that it is impossible for a team of agents on dynamic networks to correctly mark only the ports leading to black holes.
\end{abstract}

    \section{Introduction}
Exploration of networks is a fundamental problem in the field of distributed computing by mobile agents \cite{Das19_BOOK}.
Indeed, the possibility for a team of agents---such as mobile sensors, software agents, robots---to visit all the nodes of a network is a key preliminary subroutine in various contexts: e.g., monitoring node status, maintaining the network, patrolling, and mapping the topology of an unknown network.
Unsurprisingly, the exploration problem has been broadly studied from both theoretical and practical perspectives. 
Notably, the theoretical research has focused on algorithmic strategies and computational limits of different models of graph-based networked systems, which can vary both for the capabilities of mobile agents and for the network features.
First and foremost, the underlying graph of a network must be connected in order to be explored.
Besides this trivial assumption, much attention has been devoted to the \emph{memory and communication} capabilities of the agents (through pebbles, tokens, whiteboard, personal memory, zero-memory), their \emph{synchronization} (fully synchronous {\fsynch}, semi-synchronous {\ssynch}, and asynchronous {\asynch}), the \emph{graph class} of the network (e.g., arbitrary finite graphs \cite{BockenhauerFUW23,GotohFMS21}, cactus graphs \cite{ShimoyamaSKM22}, trees \cite{BockenhauerFUW23,PattanayakP24}, infinite graphs \cite{PattanayakP24}), the presence/absence of \emph{IDs} or local labels (on agents, on nodes, on the ports of the nodes), and the \emph{a priori knowledge} the agents have about the network (e.g., graph topology \cite{ErlebachKLSS19,ErlebachS18}, graph class\footnote{The \emph{topology} of a graph $G$ defines set of vertices $V(G)$ and edges $E(G)$; graphs may belong to different \emph{classes} according to their structural properties (e.g., planar graphs, trees, tori, hypercubes).}, number of nodes, unknown graph \cite{GotohFMS21}) and on the other agents (e.g., how many they are, their initial positions).
Deviating slightly from the traditional distributed setting, some works have investigated these problems for systems composed of a single agent, thus focusing on algorithmic strategies for individual graph navigation rather than agent cooperation \cite{BockenhauerFUW23, DevismesDL26,PattanayakP24}.

The (im)possibility of exploring a network by a team of agents depends not only on the assumptions of the model, but also on the specific type of exploration problem, which is defined by the \emph{task} to be accomplished and the \emph{initial configuration} from which the agents start the exploration.
Agents may be assumed to start from the same node, called \emph{home base}, from distinct nodes, or from an arbitrary arrangement possibly containing multiplicities (i.e., agents on the same node).
Also, the number of agents may play a crucial role in the problem's solvability.
According to the task, the \probfont{Perpetual Exploration} problem requires each node to be visited infinitely often; instead, the \probfont{Finite Exploration} (a.k.a. \emph{exploration with termination}) requires the agents to stop moving after each node has been visited at least once.
Further termination conditions may ask the agents to become aware of the exploration termination (\emph{explicit termination} \cite{LunaDFS20}), or, more specifically, to gather at the same node, possibly the initial home base, and terminate. 
In its most general variant, the \probfont{Exploration} problem simply asks that each node will eventually be visited, without further requests (e.g., in \cite{PattanayakP24} for infinite graphs).
 Thus, depending on the exploration variant, the literature has provided algorithmic solutions including interesting subroutines for mobile agents: e.g., \probfont{Gathering} if agents must gather at the same node, or \probfont{Map Construction} if the agents have to construct the map of the topology of the unknown network \cite{DasFKNS07}.

The majority of the work assumes agents operate on \emph{static graphs}, where the edges between the nodes are always present and fixed.
Although reasonable, this assumption does not capture the manifold of realistic scenarios where links between nodes may be temporarily or permanently unavailable due to failures, downtime, or link reconfigurations.
For this reason, \emph{temporal graphs} have been used to investigate how the possible lack of edges affects the capability of agents to explore the network \cite{Erlebach0K15, ErlebachKLSS19,ErlebachS18,  GotohFMS21}.
An adversary decides at any discrete time step $t$ the \emph{snapshot} of the temporal graph, i.e., which edges are actually present at time $t$.
An edge can stop appearing forever (\emph{transient} edges) or appear infinitely often (\emph{recurrent} edges) after periodic or unpredictable times.
However, the power of the adversary must be restricted in order to avoid any non-trivial problem from becoming unsolvable.
In this regard, the minimal property that the adversary must guarantee is \emph{temporal connectivity}, i.e., starting from any time $t$, all the pairs of nodes will be eventually connected by a temporal path, namely a \emph{journey}.
Yet, the literature has often considered stronger assumptions: \emph{periodic connectivity} requires that snapshots appear with a periodicity $p>1$, \emph{$T$-interval connectivity} for $T\geq 1$ assumes that for every window of $T$ time steps, a connected spanning subgraph persists for all the snapshots during this period \cite{IlcinkasW18,KuhnLO10}.
A special case is the 1-interval connectivity, which thus claims that each snapshot must be connected.
A further restriction on the adversary’s power is given by the \emph{$\ell$-bounded assumption}, which requires that at most $\ell$ edges may be absent from any snapshot \cite{GotohFMS21}.

Another type of network failure or adversary attack to be considered is the possible presence of dangerous nodes, commonly called \emph{black holes} after their ability to trap and destroy every agent that enters them.
Networks with black holes have been widely studied in the literature with regard to the \probfont{Black Hole Search} (\BHS) primitive \cite{BileskiM26_arxiv,DobrevFPS06,KaurS26,KaurSMM25_SSS,KaurSMM25,MarkouS19}: given a dangerous network (i.e., with a black hole), the agents must search for its location.

This paper aims to study the (perpetual and finite) exploration problem, under the combined assumptions of temporal graphs and the presence of black holes: hence the name \emph{safe exploration of dynamic dangerous networks}.

\subsection{Related work and contribution}
\paragraph*{Background: exploration of dynamic networks}
Besides the traditional aspects, the key factors in the study of the exploration problems for temporal networks are \emph{(i)} the {class} of the underlying graph, \emph{(ii)} the {connectivity assumption} of the temporal graph, and \emph{(iii)} the {a priori knowledge} that agents have of the network {topology} and the occurrences of the edges.
Indeed, some exploration problems turn out to be unsolvable under certain dynamic settings, even by increasing the power of the agents \cite{LunaDFS20}.
Most of the work has been done considering only specific graph classes (e.g., dynamic tori, trees, cactus, rings), under the strong 1-interval or T-interval connectivity assumptions \cite{Erlebach0K15,ErlebachKLSS19,ErlebachS18,GotohFMS21,GotohSOKM18,MandalMM23,MichailS16,SaRMMoSh26}.
Among these works, both known \cite{Erlebach0K15,ErlebachKLSS19,ErlebachS18,MichailS16,SaRMMoSh26}  and unknown \cite{GotohFMS21,GotohSOKM18} topologies have been considered, provided they are finite. 
In some works, agents may have partial knowledge of the topology, e.g., in terms of bounds on the graph size \cite{LunaDFS20}.

\paragraph*{Related work: unknown and arbitrary dynamic network}
Gotoh \emph{et al.} \cite{GotohFMS21} are the first to consider the exploration of unknown finite\footnote{In the remainder of the paper, we will always consider finite graphs; thus, we will omit this specification.} temporal networks whose graph class is \emph{arbitrary}.
In particular, they mainly consider the perpetual exploration problem under temporally connected graphs and 1-bounded 1-interval graphs.
Under these settings, they analyze the number of agents necessary and sufficient for the exploration, and provide algorithmic solutions under $\ssynch$ schedulers (namely, where an arbitrary subset of agents is activated at each round) and $\fsynch$ schedulers (namely, where all agents are activated at each round).
Note that, unlike most of the existing literature, their first algorithm works under the temporal connectivity assumption, namely the minimal (i.e., less restrictive) connectivity assumption that a temporal graph must hold so that a non-trivial problem, including exploration, can be solved.
Despite this very limited model, the authors propose an elegant algorithm solving perpetual exploration with oblivious agents (no personal memory) starting from arbitrary nodes\footnote{This setting is also called the \emph{scattered} setting.}, no IDs on agents or nodes, under {\ssynch}.
Each node contains a \emph{whiteboard}, i.e., a local persistent memory, initially blank, whose mutually exclusive read/write access is granted to any agent when activated in the node.
The edge ports are labeled with local, permanent, and arbitrary IDs; agents know from which port they have just entered a node.
The algorithm exploits the well-known \emph{collaborative rotor–router} mechanism \cite{DereniowskiKPU14,KosowskiP19}; this mechanism was developed as a
deterministic counterpart to random walks on graphs by multiple agents, extending the original single-agent use developed by Fraenkel \cite{Fraenkel70}.
Basically, on each node, the port labels are ordered, and a pointer is maintained so that it points to the next port to be visited, following a cyclic order. As soon as an agent decides to move along an edge, it increases the value of the pointer.
The authors in \cite{GotohFMS21} prove that, provided a sufficient number of agents, the collaborative rotor-router mechanism enables agents to solve perpetual exploration even if edges can unpredictably disappear for an unbounded, even infinite, time.

\paragraph*{Contribution: safe exploration}
In this paper, we extend the study on exploration of unknown temporal graphs by assuming that the network is \emph{dangerous}, i.e., with possibly one or more black holes.
Now, the challenge gets harder: not only do agents have no knowledge about the graph, and the presence of the edges is subject to an adversary, but agents must avoid being trapped in black holes before accomplishing the task. 

We refer to these versions of the problem as \emph{safe exploration}: a team is required to (perpetually or finitely) explore all the safe nodes, ensuring that at least one agent will never be trapped in a black hole.
To this aim, concurrently with the exploration task, agents try to properly mark, on the whiteboard of each node, the ports as either {\SAFE} or {\DANGEROUS}: the final \emph{port marking} will be used by the remaining agents to continue the exploration without falling into the black holes. Indeed, port marking is a task of independent interest.

As in \cite{GotohFMS21}, we assume that agents start from arbitrary nodes and that they are activated by {\ssynch} schedulers.
Under these assumptions, the classical \emph{cautious walk} technique for the {\BHS} problem---firstly described in \cite{DobrevFPS07}---cannot correctly identify all edges leading to black holes.
In fact, the classical cautious walk assumes the static {\fsynch} setting where agents start from the same node, and makes agents proceed as a group as follows: one agent plays the role of the \emph{probe} and explores an edge, while the others play the role of the \emph{guards} and wait for the return of the probe; if the probe does not return in the next round, the guards mark the port as dangerous. 
If the network is dynamic, the cautious walk requires two probes to verify a single port \cite{KaurSMM25}.

However, starting from a scattered configuration, there is no guarantee that two agents will gather at the same node before starting to probe the adjacent nodes.
To address this issue, in the recent work \cite{KaurSMM25_SSS} on {\BHS} in 1-bounded 1-interval connected networks, the authors adopt the \emph{individual cautious walk}; this version of the cautious walk works for scattered agents, but it assumes {\fsynch} schedulers and uses two agents as probes for each port.

The technique in \cite{KaurSMM25_SSS}---recently improved in \cite{BileskiM26_arxiv} to reduce the number of required agents---becomes ineffective under {\ssynch}: the guard agent may wait for the probe one for an unpredictable number of rounds, thus making it impossible to distinguish from an idle or a trapped agent.
We therefore formally prove that, under our setting, the problem of correctly marking all the ports is unsolvable. 
We therefore introduce other weaker variants of the port marking problem. 
Some of these variants can be solved in our setting and serve as subroutines for our agents in the safe exploration of dynamic networks.

Firstly, we focus on the \probfont{Safe Perpetual Exploration} ($\SEperp$) problem: as in \cite{GotohFMS21}, we assume oblivious and anonymous agents without any prior knowledge of the network, which is anonymous and arbitrary, under the temporal connectivity assumption.
In this very limited setting, we provide an algorithm that adopts the collaborative rotor-router mechanism and solves $\SEperp$ provided a sufficient number of agents.

Then, we consider the \probfont{Safe Finite Exploration} ($\SEfini$) problem: we adapt our first algorithm in order to make agents aware of the exploration termination, propagate the information along the network, and thus terminate.
To this aim, we assume agents are equipped with a unique ID and a persistent memory (\emph{notebook}). The only prior knowledge they have about the network is the number of safe nodes.

{Compared with existing studies, we highlight the key advances of our work:}
\begin{itemize} 
    \item \textbf{Problem combination.} Existing studies address the problem of exploration or the search for black holes independently. 
    Instead, our work combines the two problems, hence the name \emph{safe exploration}.
    Beyond its intrinsic interest, this approach is in line with the broader aim to investigate the complexity of combining two distinct problems in a single one, thereby considering more elaborate scenarios within the area of dynamic networks;
    \item\textbf{Zero or multiple black holes.} In both our algorithms, agents do not know a priori the number of black holes, which may be 0 up to $n-1$, where $n$ is the number of nodes.
    To the best of our knowledge, the literature about dangerous networks and {\BHS} assumes the presence of exactly one black hole, with the only exception of our work and \cite{ChalopinDS07};
    \item \textbf{More restricted assumptions:} The existing algorithms for {\BHS} for arbitrary dynamic networks work under the $\{1,f\}$-bounded\footnote{They consider dynamic graphs where the number of missing edges at each time can be at most 1 \cite{KaurSMM25_SSS,KaurSMM25} or $f>1$ \cite{KaurSMM25}.} 1-interval connectivity assumption, which is stronger than our temporal connectivity assumption; moreover, these algorithms terminate as soon as one agent has found \emph{one} of the ports leading to the (unique) black hole of the network \cite{KaurSMM25_SSS, KaurSMM25}.
    In contrast, here we investigate the computational power of a team of agents in marking \emph{all} the ports leading to the (zero or multiple) black holes, under the (minimal) temporal connectivity assumption.
    Moreover, most previous work on {\BHS} assumes fully synchronous agents; we consider the more adversarial case of {\ssynch}, where a scheduler chooses which agents are active at each round.
    \item \textbf{Sub-problem(s).} This work formally introduces and analyses the sub-problem \probfont{Port Marking}---i.e., mark the ports as {\SAFE} or {\DANGEROUS}---in four different versions (namely, \emph{correct}, \emph{highly reliable}, \emph{reliable}, and \emph{weakly reliable}).
    Beyond its independent interest, this distinction allows for comparing the computational power of agents in the context of the safe exploration problem under different settings.
    For example, we will prove that the impossibility of achieving correct port marking under minimal assumptions does not prevent agents from safely exploring a network; yet, we will prove that agents can achieve a highly reliable (weakly reliable, resp.) port marking while solving $\SEperp$ ($\SEfini$, resp.).
\end{itemize}
\Cref{tab:related_works} compares our contribution with the related works. Due to lack of space, some proofs, figures, and tables are provided in the appendix.
\begin{table}[t]
\centering
\resizebox{\textwidth}{!}{  
\renewcommand{\arraystretch}{1.3}{
    \begin{tabular}{|l|ccccc|cc| c|}
        \hline
        \textbf{Problem} & \textbf{Network} & $|B|$ & \textbf{Synch} & \textbf{Initial} & \textbf{Params} & \textbf{NoteB} & \textbf{WhiteB} & \textbf{AgentId}\\
        \hline
        \hline
        \cite{KaurSMM25} BHS & $\{1,f\}$-bounded 1-interval & 1& {\fsynch} & Home& \crossmark& $\log{n}$ & $\log{\Delta}$ & \greencheck\\ 
        \hline
        \cite{KaurSMM25_SSS} BHS  & 1-bounded 1-interval& 1&{\fsynch}& Any& \crossmark& $\log{n}$& $\log{n}$& \greencheck\\
        \hline
        \hline
        \cite{GotohFMS21} Perp. Explor.& Temp. conn. & 0& {\ssynch}& Any & \crossmark & \crossmark  & $\log{\Delta}$ & \crossmark  \\
        \hline
        \rowcolor{yellow!30}\textbf{Safe Perp. Explor.} &Temp. conn. & $[0,n)$ & {\ssynch} & Any & \crossmark & \crossmark & $\Delta$ & \crossmark\\
        \hline
        \rowcolor{yellow!30}\textbf{Safe Finite Explor.}  &Temp. conn. & $[0,n)$& {\ssynch} & Any & $s$ & $n\log{n}$ & $n\log{n}$ &\greencheck\\

        \hline
    \end{tabular}
}
} 
\caption{Related works vs. our contribution (yellow lines). $|B|$ is the number of black holes, $s$ is the number of safe nodes. Notebook and whiteboard sizes are expressed in O-notation w.r.t. $n$ (number of nodes), and $\Delta$ (degree of the network).}
\label{tab:related_works}
\end{table}
    \section{Preliminaries}

\subsection{Networks}
We now describe the network type within which a team of agents operates.

\paragraph*{Temporal graphs}
A temporal graph $\tmpG = (G, \varrho)$ is a graph where the presence of edges may vary over time. So,
\begin{itemize}
    \item $G=(V,E)$ is an undirected simple finite graph with $|V|=n$ nodes and $|E|=m$ edges;
    \item $\varrho:E\times \NatO \to \{0,1\}$ defines the presence (when 1) or absence (when 0) of edge $e$ at time $t$, where the time domain is $\NatO = 0,1,2\dots$.
\end{itemize}
We call the graph $G$ the \emph{underlying graph} of $\tmpG$.
Note that, if $\varrho=1$, then $\tmpG$ is a common \emph{static graph}.
Let $E_t = \{e\in E \sucht \varrho(e,t ) = 1\}$.
A temporal graph $\tmpG$ can be written also as the infinite sequence $\{G_t\}_{t\in \NatO}$ where each $G_t = (V, E_t)$ is the \emph{snapshot} of $\tmpG$ at time $t$ and contains only the present edges at time $t$.
An edge $e$ in $\tmpG$ is called \emph{recurrent} if, for any $t\in\NatO$, there exists a time $t'> t$ such that $\varrho(e,t') = 1$.
Otherwise, (i.e., if there exists a time $t\in\NatO$ such that $\varrho(e,t')=0$ for any $t'> t$), then $e$ is called \emph{transient}.

A \emph{journey} in $\tmpG$ is a sequence $J = ((e_1, t_1) , \dots, (e_k, t_k))$ such that $(e_1, \dots, e_k)$ defines a walk in $G$, and it holds that $t_i < t_{i+1}$ and $\varrho(e_i, t_i) = 1$ for any $i\in [1,k]$. In other words, a journey defines a temporal walk for an agent on the nodes of $G$.
We denote with $\J(v,w,t)$ the set of all the journeys from $v$ to $w$ starting at time $t'\geq t$.
A temporal graph is \emph{temporally connected} if $\J(v,w,t)\neq \varnothing$ for any pair of nodes $v,w$ and any time $t\in \NatO$.
Indeed, for $\tmpG$ to be temporally connected, the underlying graph $G$ must be connected; yet, the connectivity of $G$ is not sufficient to have temporal connectivity in $\tmpG$ due to the presence of transient edges.

\subparagraph*{Nodes, edges and (sub-)ports.}
Given a temporal graph $\tmpG = (G=(V,E), \varrho)$, we assume that the nodes are \emph{anonymous}, i.e., without IDs.
However, each node $v$ contains a local persistent memory called \emph{whiteboard}, i.e., a variable $\wboard_v$.
This variable has R/W access in mutual exclusion, and it is initially blank.

Given a node $v\in V$, we denote with $E(v)\subseteq E$ the set of edges in $G$ that are incident to $v$ in $G$, and we denote the \emph{degree} of $v$ as $\delta(v) = |E(v)|$.
We denote with $\Delta(\tmpG) := \Delta(G) = \max_{v\in V}\{\delta(v)\}$. 
When no ambiguity occurs, we will simply use $\Delta$.

For any incident edge, $v$ contains a \emph{port}, labeled by a bijection $\lambda_v: E(v) \to [0,\delta(v)-1]$.
This labeling is fixed (i.e., it does not change), local (i.e., it can be seen only in $v$), and arbitrary (i.e., it has no dependencies with other elements of $\tmpG$).
We indicate with $\port{v,w}$ the port at node $v$ from which the edge $\{v,w\}$ starts.
For convenience, we will use $\lambda_v(\port{v,w})$ in place of $\lambda_v(\{v,w\})$.
We can construct the global labeling function $\Lambda$ for $\tmpG$ such that $\Lambda(v,w) = \lambda_v(\port{v,w})$.
We refer to $\tmpG = (G=(V,E), \varrho, \Lambda)$ as a port-labeled temporal graph.

Generally, an edge $\{v,w\}\in E$ can be traveled in both senses at time $t\in \NatO$, if $\varrho(\{v,w\}, t) = 1$.
To distinguish the two senses, we say that each edge has two \emph{channels} that we denote with the ordered notation $(v,w)$ and $(w,v)$.
Thus, for a port $\port{v,w}$, there are two \emph{sub-ports}: the \emph{ingoing} one $\inport{v,w}$ (i.e., the endpoint of channel $(w,v)$) and the \emph{outgoing} one $\outport{v,w}$ (i.e., the endpoint of channel $(v,w)$).

Note that, if $\{v,w\}\in E$, then $\port{v,w}$ ($\port{w,v}$, resp.) and the related sub-ports are always present in $v$ ($w$, resp.) independently from the temporary presence of the edge $\{v,w\}$.

\paragraph*{Dangerous networks}
Given a port-labeled temporal graph $\tmpG = (G=(V,E), \varrho, \Lambda)$, we consider the dangerous version $\tmpG =(G=(V,E), \varrho, \Lambda, B)$ where $B\subset V$ represents the subset of nodes for which there exist no outgoing ports.
In other words, if an agent enters a node $b\in B$, it will stay there forever (we can assume the agent is destroyed as soon as it enters $b$).
For this reason, such nodes are called \emph{black holes}.

On the contrary, the nodes in $V\setminus B$ are called \emph{safe nodes}; they always contain both the ingoing and the outgoing sub-port for each incident edge.
Extending the terminology, we will say that an edge $\{v,w\}$ and the corresponding ports $\port{v,w}$ and $\port{w,v}$ are \emph{safe} (\emph{dangerous}, resp.) if both $v$ and $w$ are safe nodes (if $v$ or $w$ is a black hole, resp.)\footnote{For the sake of completeness, all the ports within a black hole are considered dangerous.}.

In the remainder, we will consider a team of agents operating on a \emph{network} defined as the port-labeled temporal dangerous graph $\tmpG =(G=(V,E),\varrho, \Lambda, B)$.
For convenience and w.l.o.g., we always assume that $E$ does not contain any edge between black holes (such edges, in fact, will never be traveled and are irrelevant for any task on dangerous networks).
Moreover, since the black holes may prevent agents from reaching some safe nodes of the graph, for exploring a temporal dangerous network $\tmpG=(G=(V,E),\varrho, \Lambda, B)$ it is necessary that for each pair of nodes $v,w\in V$ and any time $t\in \NatO$, $\J(v,w,t)$ contains at least one journey which does not contain black holes except possibly for the endpoints $v$ or $w$. 
In the following, we will make such an assumption\footnote{Note that a network with $|B|>1$ black holes can be seen as a network with a unique black hole to which all the dangerous edges are incident. However, in this case, we should assume that the underlying graph $G$ contains multiple edges (since a safe node $v$ may contain multiple dangerous ports). For consistency with the existing literature, we prefer assuming that $G$ is simple and thus that $|B|$ can be any.}.
\subparagraph*{Temporal models.}
We consider a hierarchy of temporal models to analyze the computational power of agents.
Given a temporally connected network $\tmpG$, we say that $\tmpG$ belongs to:
\begin{itemize}
	\item \TRAN{} (\emph{transient} model): when $\varrho$ allows transient edges;
	\item \RECU{} (\emph{recurrent} model): when $\varrho$ only allows recurrent edges;
	\item \STAT{} (\emph{static} model): when $\varrho = 1$ (i.e., the network is static).
\end{itemize}
By default, we consider the {\TRAN} model.

\subsection{Agents}	
    A \emph{team of agents} is a set $\agents=\{a_1, \dots, a_k\}$ of $k$ computational entities operating on a network $\tmpG =(G,\varrho, \Lambda, B)$.
	Agents are externally \emph{indistinguishable}.

    \subparagraph{Considered settings.}
    We will consider both the setting where agents are \emph{anonymous} and the setting where each agent has an internal \emph{unique ID}.
    Moreover, we consider both the setting where agents are \emph{oblivious} (i.e., devoid of any personal persistent memory), and the setting where each agent $a$ has a personal persistent memory, called \emph{notebook}, and denoted with $\nbook_a$.
    Generally, agents do not have any prior knowledge of the network (graph class, topology, number of black holes, recurrent edges, etc.) or of the team (size, initial positions, etc.).
    We can only assume that agents know the number of safe nodes $|V\setminus B|=s$.

    \subparagraph*{Positions and visibility.}
    At time 0, the agents are arranged at the center of some \emph{arbitrary} nodes of $G$ (multiplicities are allowed).
    During the evolution of the algorithm, each agent can be located in the center of a node (denoted with $\cnode$), on its ingoing or outgoing sub-ports, or traveling along the present edges.
    An agent $a$ has only \emph{local visibility}, i.e., it can only see the content of the node $v$ where $a$ is located at a given time. 
    In particular, $a$ can see the content of the whiteboard (in mutual exclusion), the labels of the ports, how many agents there are on $v$, and if these agents are on some sub-ports or on the center of the node.
    Yet, an agent cannot see if the edge of a port is present or not at a given time.

    \subparagraph*{Activation in mutual exclusion.}
    Time is divided into discrete time steps $t\in\NatO$ called \emph{rounds}.
    Agents are activated by \emph{semi-synchronous schedulers} (\ssynch): at any time $t\in\NatO$, an arbitrary subset of robots is activated according to the \emph{fairness condition} (each agent is activated infinitely often).
    We can formalize a {\ssynch} scheduler as a function 
    $\AS:\NatO \to 2^\agents  $ 
    such that $\forall a\in \agents, t\in \NatO$ there exists a time $t'>t$ such that $a\in \AS(t')$.
    Since agents operate on temporal graphs whose edge presence is decided by an adversary, it is necessary to assume another ``fairness'' condition on agents' activation: we assume the \emph{eventual transport} condition, which states that an agent located in an outgoing port of a recurrent edge will be eventually activated when the edge is present \cite{GotohFMS21,LunaDFS20}. 
    
    Within each round, the activated robots perform each one a \emph{Look-Compute-Move} (LCM) cycle. 
    Since in the Look and Compute steps agents read and write on the whiteboard, and the access to the whiteboard is in mutual exclusion, we can describe the sequential access to the whiteboard at round $t$ as a nested level of scheduling within each $\AS(t)$.
    So, we define  
    $\AS:\NatO \to \bigcup_{i=1}^{k}Perm(\agents, i)$
    where $Perm(S, i)$ represents the set of all the permutations obtained with $i$ elements taken from a set $S$.
    For example, if $\AS(3) = a_2a_9a_1a_0$, then at round $3$ the agents $a_2,a_9,a_1,a_0$ will be activated and will access the whiteboard in this order.

    \subparagraph*{Computation.}
    Let $\AS(t) = a_{i_1},\dots, a_{i_{h(t)}}$ be the sequence of activated agents at time $t$.
    All the agents in $\AS(t)$ execute the following steps (refer to \Cref{algo:LCM} in \Cref{appendix:pseudocodes}):
    
     \textbf{Look}: 
			Let $a$ be an agent in $\AS(t)$.
            Let $v$ be the node where agent $a$ is located.
            Then $a$ observes and takes these data: $pos(a)$, i.e., its position within $v$ (center or a sub-port), and the content of $\nbook_a$.
		
	\textbf{Compute}: 
            In this step, agents take access to the whiteboard in mutual exclusion.
            Thus, this step is executed in sequence, starting from $a_{i_1}$ to $a_{i_{h(t)}}$.
			Let $a$ be an agent among them during its turn.
            Then, $a$ grants access to $\wboard_v$, and reads its content.
            Let $\phi_v: \{\cnode\}\cup \{in,out\}\times[0,\delta_v-1]  \to [0,k]$ be the agents' arrangement\footnote{Indicating how many agents there are on the center $\cnode$ or on each sub-port of $v$.} on $v$ during the Compute step of $a$.
            Let $\sigma = \langle pos(a),\nbook_a ,\wboard_v,\phi_v \rangle$ be the tuple containing all the data got during the Look and Compute steps.
            Now, $a$ executes 
            $\algo(\sigma) = (p, \chi)$ where $\algo$ is the deterministic algorithm shared by all the agents, and where $p\in [0,\delta(v)-1]\cup \{\cnode, \NaN\}$ and $\chi\in \{0,1\}^*\cup \{\NaN\}$ is a binary word or undefined\footnote{We use the value $\NaN$ (not a number) to indicate that a function is not definited for that input.}.
			If $\chi$ is defined (i.e., $\chi\neq \NaN$), then $a$ writes $\chi$ on $\wboard_v$.
			Then, if $p$ is defined (i.e., $p\neq \NaN$), then $a$ moves to the outgoing port labeled as $p$ (if $p\in[0,\delta(v)-1]$) or at the center of $v$ (if $p=\cnode$); otherwise, $a$ stays still.

	\textbf{Move}: 
            Once all the agents in $\AS(t)$ have completed the Compute step, then each agent $a\in\AS(t)$ executes its movement in parallel with the other agents.
            In particular, if $a$ is at an outgoing port of some channel $(v,w)$, and $\varrho(\{v,w\}, t) = 1$, then $a$ travels along the channel $(v,w)$ and reaches the ingoing port of $w$.
			Otherwise, $a$ does nothing.

Through the infinite repetition of LCM cycles and execution of the algorithm $\algo$, the team aims to solve a common problem.

\subsection{Problems}

\subsubsection{Safe exploration}
	A node $v$ is \emph{visited} at time $t$ if at least one agent is located in $v$ (in its center or in one of its sub-ports).
    As soon as an agent $a$ visits a black hole, we say that $a$ gets \emph{trapped} in that node since it will no longer move to another node. 
    Conversely, $a$ is \emph{free} as long as it is not trapped.
    For a dangerous network $\tmpG = (G=(V,E),\Lambda, \varrho, B)$, we consider the following problems: 
	
	\begin{definition}[$\SEperp$]
		Given $k$ agents arranged on nodes of $\tmpG$, the \textsc{Safe Perpetual Exploration} ($\SEperp$) problem requires each safe node of $\tmpG$ to be visited infinitely often.
	\end{definition}
	
	\begin{definition}[$\SEfini$]
		Given $k$ agents arranged on nodes of $\tmpG$, the \textsc{Safe Finite Exploration} ($\SEfini$) problem requires the agents to reach a configuration in finite time where:
		\begin{itemize}		
			\item at least an agent is free;
            \item each safe node of $\tmpG$ has been previously visited at least once;
            \item each free agent is located at the center of a node, not necessarily the same;
			\item each agent will do nothing in the next time steps.
		\end{itemize}
	\end{definition}
	
	We are interested in the following further properties:
	\begin{itemize}
		\item \textbf{Loss-minimizing}: for both $\SEperp$ and $\SEfini$, we are interested in solutions that minimize the number of trapped agents.
		\item \textbf{Explicit termination}: for $\SEfini$, in addition to exploration completion (i.e., all nodes have been visited) and the termination of the algorithm (i.e., agents will no longer move), we require this information to be propagated to all safe nodes so that all agents get aware of the problem termination. We say that an agent \emph{gets aware} of the termination of the problem if it writes $\TERMINATION$ in its notebook\footnote{Oblivious agents can write $\TERMINATION$ on the whiteboard of the safe nodes to explicitly terminate.}.
        To avoid falling into trivial conditions, the awareness must be reached by the free agents only after solving $\SEfini$.
	\end{itemize}

\subsubsection{Port marking}
Safely exploring a dangerous network requires agents to probe and mark (in the best case, all) the ports as {\SAFE} or {\DANGEROUS}, thereby preventing all agents from being trapped in the black holes. 
We generally refer to this sub-routine as \emph{port marking} (\PM). As we will see, some ports may remain unmarked.

Specifically, a \emph{correct port marking} of a dangerous network $\tmpG = (G=(V,E),\Lambda, \varrho, B)$ is achieved when a team of agents reaches a configuration in a finite time $t\in \NatO$ where:
    \begin{itemize}
        \item at least one agent is free;
        \item all ports are \emph{correctly marked}. Namely, if $\port{v,w}$ is dangerous, it must be marked in $\wboard_v$ as {\DANGEROUS}; otherwise (i.e., $\port{v,w}$ is safe), it must be marked as {\SAFE};
        \item these values cannot be changed in the next time steps $t'\geq t$.
    \end{itemize}
Thus, a correct {\PM} presents neither false positives (i.e., safe ports marked as {\DANGEROUS}), nor false negatives (i.e., dangerous ports marked as {\SAFE}). 
Note that this does not preclude the possibility that, before $t$, some ports were incorrectly classified by the agents.
However, once the correct marking is reached at time $t$, it remains invariant thereafter and can be used by the free agents to safely explore the network.

As we will see in the next sections, it is not always possible to achieve a correct {\PM} in the {\TRAN} setting.
Therefore, we define the following three weaker variants.

A \emph{highly reliable port marking} of $\tmpG$ is achieved when a team of agents reaches a configuration in a finite time $t\in \NatO$ where: 
\begin{itemize}
        \item at least one agent is free;
        \item if $\{v,w\}$ is recurrent, then $\port{v,w}$ and $\port{w,v}$ are correctly marked as {\SAFE} or {\DANGEROUS};
        \item all the dangerous ports are correctly marked as {\DANGEROUS};
        \item these values cannot be changed in the next time steps $t'\geq t$.
    \end{itemize}
Thus, a highly reliable {\PM} differs from a correct one only in that it admits false positives only for ports belonging to transient edges; consequently, the set of the {\SAFE} ports is a subset of the actual safe ports in $\tmpG$.
Yet, this fact does not prevent the agents from reaching all the safe nodes of $\tmpG$.
In fact, let $\tmpGSAFE$ be the network obtained by removing from $\tmpG$ all the black nodes and all the edges with at least one port not marked as {\SAFE} in $\tmpG$.
Since $\tmpGSAFE$ contains all the safe recurrent edges of ${\tmpG}$, we can conclude that $\tmpGSAFE$ is temporally connected as $\tmpG$.

A \emph{reliable port marking} of $\tmpG$ allows false positives regardless of the nature of the edge, but with the constraint that $\tmpGSAFE$ must be temporally connected.
Formally, the achieved configuration must have:
\begin{itemize}
        \item at least one agent is free;
        \item all the dangerous ports are correctly marked as {\DANGEROUS};
        \item $\tmpGSAFE$ must be temporarily connected;
        \item these values cannot be changed in the next time steps.
\end{itemize}
Lastly, a \emph{weakly reliable port marking} of $\tmpG$ is reached when:
\begin{itemize}
        \item at least one agent is free;
        \item if a port is {\SAFE}-marked, it is truly safe;
        \item these values cannot be changed in the next time steps.
\end{itemize}
A correct {\PM} is highly reliable, which is, in turn, reliable, which is, in turn, weakly reliable.

    \subsection{Basic impossibilities}
We now define a property of networks which will be used to prove some impossibility results on the port marking problem (in \Cref{th:impossible_correct_PM} and later in \Cref{th:impossible_reliable_PM}).

\begin{definition}[Network indistinguishability]
	Given two networks $\tmpG = (G=(V,E), \Lambda, \varrho, B)$ and $\tmpG' = (G'=(V',E'), \Lambda', \varrho', B')$, we say that $\tmpG$ during the period $[t_a,t_b]$ is indistinguishable from $\tmpG'$ during the period $[t'_a,t'_b]$, and we denote this as $\tmpG[t_a,t_b] \indis \tmpG'[t'_a,t'_b]$, if $t_b-t_a=t'_b-t'_a$ and if there exists a bijection $h:V \to V'$ such that:
	\begin{itemize}
        \item $b\in B  \iff h(b) \in B'$;
        \item for any $v\in V\setminus B$, $\delta(v) = \delta(h(v))$;
		\item for any $t\in[0,t_b-t_a]$, it holds that $\{v,w\} \in E(G_{t_a+t})  \iff \{h(v),h(w)\} \in E(G'_{t'_a+t}) $;
        \item if $\{v,w\} \in E(G_t)$ for some $t\in[t_a,t_b]$, then $\lambda_v(\port{v,w}) = \lambda_{h(v)}(\port{h(v),h(w)})$.
	\end{itemize}
\end{definition}
Refer \Cref{fig:indistinguishable_scenarios} for an example.

\begin{figure}[th]
  \resizebox{\textwidth}{!}{
    \begin{subfigure}[t]{0.45\textwidth}
        \centering
        \scalebox{0.7}{
            \begin{tikzpicture}[on grid]  
                \def\l{2}
                \def\h{2.5}
                \node[safenode]  (v)  at (-1*\l,0)   {$v$};
                \node[safenode]   (u) at (0,0) {$u$};
                \node[safenode]   (w) at (1*\l,0) {$w$};
                \node[ggraph]   (H) at (0,-1*\h) {$H$};

                \node[blackhole,label=above:$b$] (BH) at (0,1) {};
    
                \path[dashed] (v)     edge node [above]           {}  (u);
                \path[-] (w)     edge node [above]           {}  (BH);
                \draw[snakearrow] (v) -- (H);
                \draw[snakearrow] (u) -- (H);
                \draw[snakearrow] (w) -- (H);

            \end{tikzpicture}
        }
        \caption{$\tmpG$, where $\{v,u\}$ is always absent.}
    \label{subfig:indistinguishable_scenarios_a}
    \end{subfigure}
    \hfill
    \begin{subfigure}[t]{0.45\textwidth}
        \centering
          \scalebox{0.7}{
            \begin{tikzpicture}[on grid]  
                \def\l{2}
                \def\h{2.5}
                \node[safenode]  (v)  at (-1*\l,0)   {$v$};
                \node[safenode]   (u) at (0,0) {$u$};
                \node[safenode]   (w) at (1*\l,0) {$w$};
                \node[ggraph]   (H) at (0,-1*\h) {$H$};

                \node[blackhole,label=above:$b$] (BH) at (0,1) {};

                \path[-] (w)     edge node [above]           {}  (BH);
                \path[dashed] (u)     edge node [above]           {}  (BH);
                \path[dashed] (v)     edge node [above]           {}  (BH);
                \draw[snakearrow] (v) -- (H);
                \draw[snakearrow] (u) -- (H);
                \draw[snakearrow] (w) -- (H);

            \end{tikzpicture}
        }
    \caption{$\tmpG'$, where $\{v,b\}$ and $\{u,b\}$ are absent during the period $[0,t]$.}
    \label{subfig:indistinguishable_scenarios_b}
    \end{subfigure}
    }
\caption{Two indistinguishable networks $\tmpG[0,t]\indis\tmpG'[0,t]$ with the same static subgraph $H$, but different edges among $v$, $u$, and the black hole. Solid (dashed, resp.) straight lines represent recurrent (transient, resp.) edges.}
\label{fig:indistinguishable_scenarios}
\end{figure}  

A direct consequence of this property is that the same team of oblivious agents will have identical behavior when deployed on indistinguishable graphs if they start from indistinguishable configurations.
Formally, let $\agents$ be a group of $k$ oblivious agents deployed on $\tmpG$, $\algo$ be the deterministic algorithm they execute, $\AS$ be the scheduler activating $\agents$.
Let $\tmpG[t_a,t_b] \indis \tmpG'[t'_a,t'_b]$ for some $\tmpG'$ graph according to some bijection $h$. 
Now, suppose $\agents$ is arranged on $\tmpG'$ so that, for all $v\in V\setminus B$:
\begin{itemize}
    \item $\wboard_v$ at time $t_a$ is equal to $\wboard_{h(v)}$ at time $t'_a$;
    \item $\varphi_v(t_a) = \varphi_{h(v)}(t'_a)$, where $\varphi_v(t)$ defines the position of all the agents within $v$ at time $t$. 
\end{itemize}
Suppose that, while on $\tmpG'$, agents are activated by a scheduler $\AS'$ so that $\AS'(t_a'+t) = \AS(t_a+t)$ for $t\in [0,t_b-t_a]$.
Then, during $[t'_a,t'_b]$, $\agents$ performs on $\tmpG'$ the same actions as on $\tmpG$.

\begin{theorem}\label{th:impossible_correct_PM}
    It is impossible to achieve a correct port marking under the $\TRAN$ model even when agents know $|V|$, $|B|$, $k$, have infinite memory, and are activated under ${\fsynch}$.
\end{theorem}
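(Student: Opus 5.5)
The plan is an indistinguishability argument built on the two networks of \Cref{fig:indistinguishable_scenarios}. We need two networks with the same $|V|$, $|B|$ and $k$ that no team can tell apart during any finite prefix, yet in which some port needs opposite correct markings. Given any algorithm $\algo$ claimed to reach a correct port marking, we run it on $\tmpG$ and then use the finite time at which the marking becomes final to build a second network $\tmpG'$ on which the same marking is wrong.

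\textbf{The two networks.} In $\tmpG$ the node $v$ has a port that leads to a safe neighbour $u$ along an edge $\{v,u\}$ that is never present. This edge is transient, which is allowed in $\TRAN$, and temporal connectivity is kept through the static subgraph $H$. A correct port marking must therefore mark $\port{v,u}$ as {\SAFE}. To keep degrees and the number of black holes identical, we let $\tmpG'$ have the same node set and the same black hole $b$. In $\tmpG'$ the port of $v$ carrying the same label leads instead to $b$, and symmetrically the corresponding port of $u$ leads to $b$ (the edges $\{v,b\}$, $\{u,b\}$). These edges are absent during $[0,t]$ and may reappear, or stay absent, afterwards. The dangerous edges need not help temporal connectivity, since $H$ already provides it, so both networks are valid and temporally connected. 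The degrees of all safe nodes agree, and port labels agree on every edge that is ever present in $[0,t]$. Hence $\tmpG[0,t]\indis\tmpG'[0,t]$ via the identity bijection.

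\textbf{Running the algorithm.} Fix the $\fsynch$ scheduler, the same initial configuration and blank whiteboards in both networks. Let $t$ be the finite time at which $\algo$ on $\tmpG$ reaches its final correct marking; at that moment $\port{v,u}$ is {\SAFE} in $\wboard_v$. We then choose $\tmpG'$ using this $t$.

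\textbf{Main obstacle.} The key step is justifying that the agents behave identically on $\tmpG$ and $\tmpG'$ during $[0,t]$ even though they have memory, IDs, and knowledge of $|V|$, $|B|$, $k$. The paper's remark on indistinguishability is stated for oblivious agents. We extend it by induction on rounds: an agent's entire view (its notebook, the whiteboard, the arrangement $\phi_v$, and its own sub-port position) is a deterministic function of what it has observed. What it observes coincides in both networks as long as the same edges are present and the same labels are used, which holds during $[0,t]$. The knowledge of $|V|$, $|B|$, $k$ is identical in both networks by construction, and $\fsynch$ fixes the scheduler. One point needs care. An agent sitting on the outgoing sub-port of a missing edge just waits, so no agent can enter $b$ through $v$ or $u$ before $t$. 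Consequently the same agents are free and located identically in both runs, and the whiteboards at time $t$ coincide. In particular $\port{v,u}$, which in $\tmpG'$ is the dangerous port $\port{v,b}$, is marked {\SAFE}.

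\textbf{Conclusion.} Since a correct marking must never change after it is reached, and $\algo$ keeps this marking on $\tmpG$ forever, it remains unchanged on $\tmpG'$ at least until the runs diverge. But on $\tmpG'$ a dangerous port is marked {\SAFE}. So either $\algo$ never reaches a final correct marking on $\tmpG'$, or it changes a value that was supposed to be final, contradicting correctness on $\tmpG$. Whichever network was run first, $\algo$ fails on one of them.
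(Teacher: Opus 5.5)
\textbf{Gap in the conclusion.} Your construction matches the paper's proof, and so does the indistinguishability step; your induction extending the paper's remark to agents with notebooks and IDs is a point the paper itself glosses over. The problem is the last step. You let the edges $\{v,b\}$ and $\{u,b\}$ of $\tmpG'$ ``reappear, or stay absent'' after $t$, and you claim a contradiction in both cases. In the reappearance case there is none.

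Once these edges show up after $t$, the two executions diverge. Nothing then prevents $\algo$ on $\tmpG'$ from later overwriting the {\SAFE} mark of $\port{v,b}$ with {\DANGEROUS}. The definition of correct port marking explicitly allows wrong marks before the stabilization time. The permanence of the marking in the run on $\tmpG$ constrains only that run, not a different execution on a different network. So ``it changes a value that was supposed to be final, contradicting correctness on $\tmpG$'' is not a valid inference. ``Whichever network was run first'' has no content either, since the two runs are independent executions. As written, your argument only shows that $\algo$ is wrong on $\tmpG'$ up to the divergence time, which the definition permits.

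\textbf{The fix.} Have the adversary commit to your second option: $\{v,b\}$ and $\{u,b\}$ are absent at all times. They are then transient, which $\TRAN$ allows, and temporal connectivity is still provided by $H$ and the edge $\{w,b\}$. Then $\tmpG[0,\infty]\indis\tmpG'[0,\infty]$, and your inductive argument gives identical executions forever. The dangerous port $\port{v,b}$ therefore keeps its {\SAFE} mark forever, so $\algo$ never reaches a correct marking on $\tmpG'$.

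This is exactly the paper's argument. It also removes the need to choose $\tmpG'$ after learning $t$.
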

\begin{proof}
    The claim derives from the impossibility of correctly marking transient edges.
	Let us proceed by contradiction, and let $\algo$ be an algorithm providing a correct port marking in the {\fsynch}-$\TRAN$ model.
	Let $\tmpG$ be the dangerous network depicted in \Cref{subfig:indistinguishable_scenarios_a} where the two safe nodes $v$ and $u$ are connected by a transient edge.
    Suppose the edge $\{v,u\}$ permanently disappears at time $0$.
    Since $\algo$ correctly marks the ports, let $t$ be the finite time when both $\port{v,u}$ and $\port{u,v}$ will be permanently marked as {\SAFE} in $\tmpG$.

    Now assume the same team is arranged on a graph $\tmpG'$ (depicted in \Cref{subfig:indistinguishable_scenarios_b}) which is identical to $\tmpG$, except that the edge $\{v,u\}$ is replaced with two transient edges $\{v,b\}$ and $\{u,b\}$.
	Suppose that the edges $\{v,b\}$ and $\{u,b\}$ are always absent.
    Then, since $\tmpG[0,\infty]\indis\tmpG'[0,\infty]$, we have that $\algo$ marks $\port{v,b}$ and $\port{u,b}$ in $\tmpG'$ as $\SAFE$ at time $t$, and such marking will remain invariant forever: this contradicts the fact that $\algo$ provides a correct port marking.
\end{proof}

    \section{Safe Perpetual Exploration}\label{sec:SE_perpetual}
\subsection{Assumptions and limits}
We study the $\SEperp$ problem for a team of anonymous and oblivious agents, under the $\ssynch$ setting, arbitrarily deployed on the safe nodes of a dangerous, anonymous, and $\TRAN$ network $\tmpG = (G=(V,E), \Lambda, \varrho, B)$ with $|V|=n$ nodes and $0 \leq |B| < n$ black holes.
Agents do not have any prior knowledge about the network or the team.
Under these assumptions, we first observe that it is impossible to explore (perpetually or finitely) the network with fewer than $\sum_{b\in B}\delta(b)$ agents.
In fact, since agents do not know the size of the network, any unexplored edge may lead to an unexplored leaf safe node; thus, each edge must be traversed at least once in order to explore all the nodes.
It follows that any algorithm solving $\SEperp$ under these assumptions will have at least $\sum_{b\in B}\delta(b)$ trapped agents, one for each dangerous port. 
This also defines the lower bound on the loss of any exploration solution.
Moreover:
\begin{theorem}[\cite{GotohFMS21}]\label{th:SEperp_Gotoh}
    There exists a network $\tmpG = (G=(V,E), \Lambda, \varrho, B)$ with $B=\varnothing$ and $\eta$ transient edges for which it is impossible to solve $\SEperp$ with fewer than $2\eta$ agents.
\end{theorem}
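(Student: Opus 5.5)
Since $B=\varnothing$, $\SEperp$ here is just perpetual exploration of a temporally connected network. The plan is to build a network with $\eta$ transient edges and an adversary that, against any algorithm run by $k<2\eta$ agents, keeps some node unvisited forever while preserving temporal connectivity. The underlying graph will have $\eta$ ``gadgets'' hung off a static core. Each gadget consists of two transient edges, or two nodes joined through a transient bottleneck, so that the only way to keep a far node reachable is to have an agent waiting on each side of the transient edge. Concretely, I would use a chain or star of $\eta$ segments, each separated by a transient edge $e_i$. Temporal connectivity only requires that, from every time $t$, a journey exists between every pair. So the adversary may delete $e_i$ for arbitrarily long stretches, provided it re-inserts it eventually. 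It can also make $e_i$ genuinely transient, disappearing forever, as long as connectivity is maintained through another route.

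The core of the argument is a counting and blocking strategy. Whenever an agent sits on the outgoing port of $e_i$ in one direction, the adversary keeps $e_i$ absent. This is legitimate only while the agent's waiting does not violate temporal connectivity or the eventual transport condition. Eventual transport forces the adversary to let the agent cross only if $e_i$ is recurrent. The adversary can therefore declare $e_i$ transient at the moment it decides to block it forever. To block an edge forever without breaking temporal connectivity, each gadget needs an alternate path. My choice is to realise each gadget as a cycle in which two edges are transient. The adversary can then permanently cut one edge in each cycle while the other stays recurrent. The $2\eta$ bound comes from the following: to traverse a gadget in both directions infinitely often despite the cut, the team must keep one agent stranded at each endpoint of each permanently removed edge, and each such stranded agent is lost for exploration. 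This reduces to the known single-gadget argument of \cite{GotohFMS21}, repeated $\eta$ times independently.

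The steps are as follows.
\begin{enumerate}
\item Fix any algorithm and $k<2\eta$ agents.
\item Simulate the execution under an adversary that removes an edge forever as soon as an agent is committed to it. Since agents cannot see edge presence, an agent waiting at an outgoing port cannot tell ``absent forever'' from ``absent for now''.
\item Argue that each removed transient edge absorbs agents at its ports. With at most $2\eta$ such ports in total and fewer than $2\eta$ agents, eventually either all agents are stuck at dead ports, or some gadget has no agent able to enter its far part. That part is then visited only finitely often.
\item Check that the final snapshot sequence is temporally connected and that the number of transient edges is exactly $\eta$.
\end{enumerate}

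The main obstacle is step 3. An agent stuck at the port of a dead edge may be unstuck by the algorithm, since in $\ssynch$ it may later decide to move back to the center. So I would make the adversary adaptive: it postpones committing an edge to transient until stable stuck configurations arise, and shows that otherwise the edge can be treated as recurrent and blocked only finitely often. Getting this right precisely is exactly the content of \cite{GotohFMS21}. Since the statement is attributed there, I would ultimately invoke their construction for the case $B=\varnothing$ and note that it applies verbatim to $\SEperp$.
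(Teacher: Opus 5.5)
Your proposal ends up doing what the paper does: the paper gives no proof of its own and imports the lower bound from \cite{GotohFMS21}, adding only the informal remark that two agents are needed per transient edge, and your observation that with $B=\varnothing$ the problem $\SEperp$ is plain perpetual exploration (so $\eta'=\eta$ and the cited result applies unchanged) is what that import needs. The gadget sketch before the citation does not carry any weight and would not stand on its own as written: a cycle gadget does not by itself force an agent to wait forever at a dead port, and the stranded agents are something the adversary inflicts rather than something the team needs in order to traverse the gadget. So the whole argument rests on the citation, exactly as in the paper.
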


Informally, the former lower bound derives from the fact that $2$ agents are needed to “deal with” the two endpoints of each transient edge. 
However, since in our case $B$ may be non-empty and some transient edges may be dangerous, we can combine the two lower bounds (i.e., $\sum_{b\in B}\delta(b)$ to explore the dangerous ports and $2\eta$ to deal with the transient edges) by considering the parameter $\eta'$, i.e., the number of transient edges which are not dangerous.
Thus, we will always assume that any instance $\tmpG = (G=(V,E), \Lambda, \varrho, B)$ of the $\SEperp$ problem starts with at least $2\eta' + \sum_{b\in B}\delta(b) +1$ agents arranged on $V\setminus B$.

\subsection{Algorithm for $\SEperp$}
Under the above assumptions, we provide an algorithm, called $\Aperp$, which solves $\SEperp$ by combining the rotor-router mechanism with a cautious walk technique.
$\Aperp$ needs a $O(\Delta)$-size whiteboard\footnote{The precise upper bound is given by $O(\hat{\Delta})$ where $\hat{\Delta}$ corresponds to the maximal degree of the safe nodes in $\tmpG$. Yet, we use $\Delta$ for consistency with the literature.}.
By \Cref{th:impossible_correct_PM}, the final port marking provided by $\Aperp$ may not be correct; yet, it is always highly reliable.
\begin{center}
\footnotesize{
\renewcommand{\arraystretch}{1.4}{
    \begin{tabular}{|cccc|cc| c|c|}
        \hline
        \multicolumn{8}{|c|}{\textbf{Algorithm} $\Aperp$ -- \textbf{Safe Perpetual Exploration}, $\SEperp$}\\
        \hline
        \textbf{Network} & \textbf{Synch} & \textbf{Initial} & \textbf{Params} & \textbf{Mem} & \textbf{WhiteB} & \textbf{AgentId} & \textbf{Port marking}\\
        \hline
        \hline
        \rowcolor{yellow!30} {\TRAN} & {\ssynch} & Any & \crossmark & \crossmark & $\Delta$ & \crossmark & Highly reliable\\
        \hline
    \end{tabular}
}}
\end{center}

\subparagraph{Initialization.}
Each agent $a\in \agents$ is initially located at the center of an arbitrary safe node.
The whiteboard $\wboard_v$ of each safe node $v\in V\setminus B$ contains the following variables:
\begin{itemize}
    \item the pointer $\rotor_v \in [0,\delta(v)-1]$ which points to one port, 0 by default. This pointer will be used to implement the rotor-router mechanism;
    \item the array $\ports_v \in \{\SAFE, \UNEXPLORED, \DANGEROUS\}^{\delta(v)}$ which contains the state of each port. Namely, $\ports[i]$ indicates the state of port $i$.
    This array will be used to implement the port marking. At time 0, each port is marked as {\UNEXPLORED}.
\end{itemize}
Thus, the size of $\wboard_v$ is $\Theta(\delta_v)$ bits.

\subparagraph{Strategy.}
From a high-level perspective, our algorithm follows this scheme: unexplored ports are first explored cautiously, and, if safe and recurrent, they will eventually be marked as such; the rotor-router mechanism guarantees a complete exploration of all the nodes of the network, and eventually a perpetual exploration of all and only the safe nodes.

Let us now give the details of our algorithm.
The choice of the next edge to be traversed for an agent $a$ at a node $v$ is computed by the utility $\nextcandidate$, which locally computes the next candidate port for exploration within a node $v$ (refer to \Cref{algo:next_candidate} for the pseudocode). 
A \emph{candidate} port must be \emph{empty} (i.e., without any agent on its sub-ports) and not marked as {\DANGEROUS}.
Thus, $\nextcandidate{}$ returns the first empty port marked as {\SAFE} or {\UNEXPLORED} within $v$, starting from the port pointed by $\rotor_v$, or $\NaN$ if no candidate exists.
In the first case, $a$ positions itself on the outgoing sub-port returned by $\nextcandidate{}$ and increments the value of $\rotor_v$ by one. Otherwise, it does nothing and waits for a candidate.

Before the exploration of an {\UNEXPLORED} port $\port{v,w}$, the agent $a$ marks it as {\DANGEROUS} in $\ports_v$; if the agent will eventually reach $\inport{w,v}$, it marks such a port as {\SAFE} in $\ports_w$ if it was marked otherwise.
Then, the agent could be required to go back along the channel $(w,v)$ to mark $\port{v,w}$ as {\SAFE} as well; this happens only if $\port{w,v}$ was marked as {\UNEXPLORED} before the arrival of $a$.
Let us now describe in detail the algorithm.

\subparagraph{Algorithm scheme.}
Let $\AS(t)$ be the set of agents activated at time $t$. 
We now describe $\Aperp$ by defining the actions made sequentially by each agent $a\in \AS(t)$ during its Compute step.
Refer to \Cref{algo:SE_perpetual} for the pseudocode of $\algo^\infty$.
Let $v$ be the node where $a$ lies at the beginning of the round $t$: if $v$ is a black hole, $a$ does nothing (w.l.o.g., we can assume that $a$ is destroyed once in a black hole).
Otherwise, the actions of $a$ depends on $pos(a)$ as follows, where $pos(a)$ is the relative position of $a$ within $v$:
\begin{itemize}
    \item if $pos(a) = \cnode$ (i.e., the center), then $a$ reads from $\wboard_v$ the value of $\rotor_v$, and computes the value $j=\nextcandidate{}$.
	Then, $a$ behaves as follows:
	\begin{itemize}
        \item if $j = \NaN$ (i.e., there are no candidate ports in $v$), then $a$ does nothing;
		\item otherwise (i.e., $j\in [0,\delta_v-1]$), $a$ acts as follows:
        \begin{itemize}
            \item $a$ moves to the outgoing sub-port $\lambda^{-1}(j)$;
            \item $a$ updates the value of $\rotor_v$ so that now it points to $j+1 \mod \delta_v$;
            \item if $\ports_v[j] = \UNEXPLORED$, then $a$ locks the port $\lambda^{-1}_v(j)$ by updating the corresponding state $\ports_v[j] \gets \DANGEROUS$.
        \end{itemize}
		
	\end{itemize}
	
	\item if $pos(a) = \inport{v,w}$ (i.e., $a$ is activated after its Move along the channel $(w,v)$), then:
	\begin{itemize}
        \item if $\ports_v[\lambda_v(w)] = \UNEXPLORED$, then $a$ updates the state of the port to $\SAFE$ and moves to the outgoing port $\outport{v,w}$ in order to go back to the node $w$;
        \item if $\ports_v[\lambda_v(w)] =  \DANGEROUS$, then $a$ updates the state of the port to $\SAFE$ and moves to the center of $v$. In this case, $a$ does not need to go back to $w$ since there is already an agent (the one which has marked the port as {\DANGEROUS}) that is moving/has moved along the channel $(v,w)$ and it will mark the port $\port{w,v}$ as {\SAFE}.
		\item if $\ports_v[\lambda_v(w)] = \SAFE$, then $a$ moves to the center of $v$. Even in this case, $a$ does not need to go back to $w$ since there is already an agent (the one that has marked the port as {\SAFE}) that is moving/has moved along the channel $(v,w)$, and that will mark the port $\port{w,v}$ as {\SAFE} in case.
	\end{itemize}
	
	\item if $pos(a) = \outport{v,w}$, it does nothing.

\end{itemize}

During the Move step, each agent in $\AS(t)$ that is located in an outgoing sub-port of a safe node will travel along the corresponding edge if such an edge is present at time $t$.
In particular, if $a$ is located at $\outport{v,w}$ during the Move step of round $t$, and if $\varrho(\{v,w\}, t) = 1$, then $a$ will be located at $\inport{w,v}$ at the beginning of round $t+1$.

\subsection{Analysis of $\Aperp$}
We here prove the properties and the correctness of $\Aperp$ in solving $\SEperp$ (proofs in \Cref{sec:proofs_Aperp}).

\begin{lemma}\label{lemma:SEperp_safe}
    If a port $\port{v,w}$ is marked as {\SAFE} at round $t$, then $w$ is truly a safe node and $\port{v,w}$ permanently remains marked as {\SAFE} thereafter.
\end{lemma}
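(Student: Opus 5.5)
The plan is to inspect every line of $\Aperp$ that writes to $\ports_v[\cdot]$, and for each one establish both claims of the statement: the soundness of writing {\SAFE}, and the fact that a {\SAFE} entry is never overwritten. There are exactly three writes.

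\begin{itemize}
\item[(a)] At the center, $\UNEXPLORED\to\DANGEROUS$, performed only when $\ports_v[j]=\UNEXPLORED$.
\item[(b)] At $\inport{v,w}$, $\UNEXPLORED\to\SAFE$.
\item[(c)] At $\inport{v,w}$, $\DANGEROUS\to\SAFE$.
\end{itemize}

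Permanence is immediate from this list. No instruction ever changes an entry whose value is {\SAFE}: write (a) is guarded by the test $\ports_v[j]=\UNEXPLORED$, and the other two writes only produce {\SAFE}. Since the whiteboard is accessed in mutual exclusion, each test-and-write in a Compute step is atomic, so interleavings cannot break this. The only caveat is that black holes have no whiteboard to modify, which is harmless.

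For soundness, I consider an agent $a$ that performs write (b) or (c) on $\port{v,w}$ while located at $\inport{v,w}$. Since $a$ is executing a Compute step, $a$ is free, so $v$ is safe. The agent reached $\inport{v,w}$ only by traversing the channel $(w,v)$. Before that it sat at $\outport{w,v}$, a sub-port of $w$ from which it departed. Black holes have no outgoing sub-ports, and an agent entering a black hole is trapped. Hence $w$ is safe, and $\port{v,w}$ is truly safe (both endpoints safe).

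One further case must be addressed: the statement also covers $\port{v,w}$ being marked {\SAFE} in $\ports_v$ by an agent that instead stood at $\inport{w,v}$ and moves back. This case does not arise, because every {\SAFE}-write is made by an agent on the ingoing sub-port of the very port being marked. An agent that goes back along $(v,w)$ marks $\port{w,v}$ at $w$ upon arriving at $\inport{w,v}$, which is again a write of type (b) or (c) at the receiving node.

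I also need to rule out the initial configuration as a source of {\SAFE} marks. Agents start at centers of safe nodes, never on ingoing sub-ports, and all entries start as {\UNEXPLORED}. An induction on rounds over the invariant ``every {\SAFE} entry $\ports_v[\lambda_v(w)]$ has $v,w$ safe and stays {\SAFE}'' then closes the argument.

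The main obstacle is the bookkeeping that an agent on $\inport{v,w}$ genuinely arrived via the channel $(w,v)$ from a non-black-hole. This depends on the model facts that the only way to occupy an ingoing sub-port is the Move step from the opposite outgoing sub-port, and that black holes lack outgoing sub-ports, so I would state these explicitly from the model definitions.
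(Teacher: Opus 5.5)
Your proposal is correct and takes essentially the same route as the paper. A {\SAFE} mark on $\port{v,w}$ is only ever written by an agent located on $\inport{v,w}$, which can only have arrived by traversing the channel $(w,v)$ from an outgoing sub-port of $w$, so $w$ is safe; and no instruction of $\Aperp$ ever overwrites a {\SAFE} entry. The paper states this in two lines, and your enumeration of the three port writes plus the check of the initial configuration just make each of those steps explicit.
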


\begin{lemma}\label{lemma:SEperp_dangerous}
    If a port $\port{v,w}$ is marked as {\DANGEROUS} at round $t$ and node $w$ is truly a black hole, then such a port will be permanently marked as {\DANGEROUS}.
\end{lemma}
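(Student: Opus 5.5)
To prove \Cref{lemma:SEperp_dangerous}, I would list every instruction of $\Aperp$ that writes to $\ports_v$. Then I would show that none of them can fire on $\ports_v[\lambda_v(w)]$ once this entry is {\DANGEROUS} and $w\in B$.

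The writes to $\ports_v[j]$ in $\Aperp$ come in three kinds.
\begin{enumerate}
\item An agent at the center of $v$ sets $\ports_v[j]\gets\DANGEROUS$, but only when $\ports_v[j]=\UNEXPLORED$.
\item An agent at $\inport{v,u}$ with $j=\lambda_v(u)$ sets the entry from {\UNEXPLORED} to {\SAFE}.
\item An agent at $\inport{v,u}$ with $j=\lambda_v(u)$ sets the entry from {\DANGEROUS} to {\SAFE}.
\end{enumerate}
The pointer update $\rotor_v$ does not touch $\ports_v$. Rule (1) cannot act on an entry that is already {\DANGEROUS}, and rule (2) only acts on {\UNEXPLORED} entries. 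So the only way the value {\DANGEROUS} of $\ports_v[\lambda_v(w)]$ could ever change is by rule (3). Rule (3) requires an agent $a$ to be located at the ingoing sub-port $\inport{v,w}$ and to be activated there.

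Next I would argue that no agent can ever occupy $\inport{v,w}$. An agent reaches $\inport{v,w}$ only by the Move step along the channel $(w,v)$. For that, it must have been positioned on $\outport{w,v}$ at the end of a Compute step executed by an agent located in $w$. But $w\in B$ has no outgoing sub-ports, and an agent in a black hole does nothing (it is destroyed or trapped). Initially all agents are at centers of safe nodes, so no agent starts on $\inport{v,w}$ either. By induction on the rounds, the invariant ``no agent is ever on $\outport{w,v}$, hence none is ever on $\inport{v,w}$'' holds. Therefore rule (3) never applies to this entry.

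Combining the two steps, the value {\DANGEROUS} written at round $t$ stays {\DANGEROUS} at every round $t'\ge t$. Mutual exclusion on $\wboard_v$ guarantees there are no concurrent-write anomalies, since each write is one of the enumerated rules. The only step needing some care is the exhaustive enumeration of write operations, which I would check against the pseudocode of \Cref{algo:SE_perpetual}. The argument is otherwise routine.
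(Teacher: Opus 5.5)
Your proof is correct and follows the paper's argument: the only transition out of {\DANGEROUS} is the {\SAFE}-marking by an agent activated on $\inport{v,w}$, which requires a traversal of the channel $(w,v)$ and is therefore impossible when $w\in B$. Your explicit list of the writes to $\ports_v$ and your inductive invariant that no agent ever occupies $\outport{w,v}$ (hence none reaches $\inport{v,w}$) spell out what the paper's shorter justification leaves implicit.
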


\begin{lemma}\label{lemma:SEperp_false_dangerous}
    If a port $\port{v,w}$ is marked as {\DANGEROUS} at round $t$ and node $w$ is safe and $\{v,w\}$ is recurrent, then $\port{v,w}$ will be eventually and permanently marked as {\SAFE}.
\end{lemma}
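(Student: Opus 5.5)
The plan is to trace the unique agent responsible for the {\DANGEROUS} mark and show that it, or an agent coming from the opposite side, eventually overwrites the mark with {\SAFE}. Permanence then follows from \Cref{lemma:SEperp_safe}.

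\emph{Step 1: how the mark arises.} First I would establish a simple invariant from the rules of $\Aperp$. The entry $\ports_v[\lambda_v(w)]$ only moves along the transitions $\UNEXPLORED\to\DANGEROUS$, $\UNEXPLORED\to\SAFE$ and $\DANGEROUS\to\SAFE$.
\begin{itemize}
\item The transition to {\DANGEROUS} happens only when an agent $a$ at the center of $v$ selects the port via $\nextcandidate{}$ and moves onto $\outport{v,w}$ in the same Compute step.
\item Every transition to {\SAFE} is made by an agent arriving at $\inport{v,w}$, hence coming from $w$.
\end{itemize}
So if $\port{v,w}$ is {\DANGEROUS} at round $t$, there is a unique locking agent $a$ that moved to $\outport{v,w}$ at the locking round $t_0\le t$. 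Moreover, no agent traversed the channel $(v,w)$ before $t_0$, because agents leave only through candidate ports and the lock happens at the moment of leaving. Likewise, while the mark stays {\DANGEROUS}, no other agent enters $\outport{v,w}$: the port is not a candidate, and it is also non-empty while $a$ sits there.

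\emph{Step 2: $a$ reaches $w$ safely.} An agent on an outgoing sub-port does nothing until it is moved. Since $\{v,w\}$ is recurrent, the eventual transport condition guarantees that $a$ is eventually activated in a round where the edge is present. It then reaches $\inport{w,v}$, and it is not trapped because $w$ is safe.

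\emph{Step 3: case analysis at $w$.} When $a$ is next activated at $w$, I would argue that $\ports_w[\lambda_w(v)]$ cannot be {\SAFE}. That value is written only by an agent arriving at $w$ through the channel $(v,w)$, and by Step 1 $a$ is the first such agent. This leaves two cases.
\begin{itemize}
\item If the entry is {\UNEXPLORED}, $a$ marks it {\SAFE} and moves to $\outport{w,v}$. By recurrence and eventual transport, $a$ then arrives at $\inport{v,w}$. There it finds $\port{v,w}$ still {\DANGEROUS}, because only agents arriving from $w$ change it and none but $a$ can be on its way. So $a$ marks it {\SAFE}.
\item If the entry is {\DANGEROUS}, then by the Step 1 argument applied symmetrically to $\port{w,v}$, there is a locking agent $b$ on $\outport{w,v}$ that has not yet crossed. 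By recurrence and eventual transport, $b$ eventually arrives at $\inport{v,w}$. It finds $\port{v,w}$ marked {\DANGEROUS}, or already {\SAFE}, and in either case leaves it {\SAFE}.
\end{itemize}

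\emph{Step 4: conclusion and main obstacle.} In every case, $\port{v,w}$ is marked {\SAFE} at some finite round, and it remains so forever by \Cref{lemma:SEperp_safe}. The delicate part is the bookkeeping in Steps 1 and 3. There I must rule out interleavings under $\ssynch$ in which another agent sets the mark at $w$ to {\SAFE}, or interferes with $a$ or $b$ on the sub-ports. I would handle this by proving the invariant as a small induction over rounds: while $\port{v,w}$ is not {\SAFE}, at most one agent has ever used the channel $(v,w)$, namely the locker.
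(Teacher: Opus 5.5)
Your proposal follows the paper's argument: trace the agent $a$ that locked $\port{v,w}$ across the recurrent edge, split on the mark of $\port{w,v}$ ($a$ itself returns if it is {\UNEXPLORED}, the opposite locker does the job if it is {\DANGEROUS}), and get permanence from \Cref{lemma:SEperp_safe}; your invariant also shows that the paper's extra case ``$\port{w,v}$ already {\SAFE}'' cannot occur. One claim needs correcting: in the {\DANGEROUS} case the locker $b$ of $\port{w,v}$ need not still be on $\outport{w,v}$ (it may have crossed in the same round as $a$ or earlier), but then it either sits on $\inport{v,w}$ and marks $\port{v,w}$ {\SAFE} when next activated, or has already done so, which matches the paper's ``has already moved or is going to move'' and leaves your conclusion intact.
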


\begin{lemma}\label{lemma:always_agent_travels}
    Consider a team of at least $2\eta' + \sum_{b\in B}\delta(b) +1$ agents executing $\Aperp$ on a network $\tmpG = (G=(V,E), \Lambda, \varrho, B)$. Then, for each round $t$, there exists a time $t'\geq t$ where at least one agent travels along an edge.
\end{lemma}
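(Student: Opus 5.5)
We argue by contradiction: assume there is a round $t$ after which no agent ever travels along an edge. Then from some round $t_0 \geq t$ on the configuration is frozen. No agent moves between nodes, and since whiteboard updates only happen when an agent arrives at an ingoing sub-port or leaves the center, eventually no agent changes its position inside a node either. We classify where each of the $k \geq 2\eta' + \sum_{b\in B}\delta(b) + 1$ agents can sit forever and show that the available places cannot absorb all of them.

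First, agents on an ingoing sub-port $\inport{v,w}$ of a safe node act as soon as they are activated, by fairness. They move either to the center or to $\outport{v,w}$, so no agent stays on an ingoing sub-port forever. Second, consider an agent stuck forever on an outgoing sub-port $\outport{v,w}$. If $\{v,w\}$ is recurrent, the eventual transport condition guarantees it is activated while the edge is present, so it travels, which is a contradiction. Hence every agent stuck on an outgoing sub-port lies on a transient edge. A dangerous transient edge has only one safe endpoint, and a safe transient edge has two outgoing sub-ports. So the agents parked on outgoing sub-ports number at most $2\eta'$ plus the number of dangerous transient ports. Third, trapped agents: by \Cref{lemma:SEperp_dangerous}, once a port towards a black hole is marked {\DANGEROUS} it stays so, and an {\UNEXPLORED} port is entered only right after being marked {\DANGEROUS}. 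Hence each dangerous port swallows at most one agent, giving at most $\sum_{b\in B}\delta(b)$ trapped agents. The dangerous transient ports holding a parked agent are exactly ports that have not swallowed one, so together the parked-on-dangerous and trapped agents number at most $\sum_{b\in B}\delta(b)$. This accounting must be done carefully, counting per dangerous port rather than adding the two bounds.

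Therefore at least one agent $a$ sits forever at the center of some safe node $v$. Since it never moves, \nextcandidate{} returns $\NaN$ at every activation of $a$ after $t_0$. So every port of $v$ is either marked {\DANGEROUS} or occupied by an agent. A port marked {\DANGEROUS} leads either to a black hole, whose single allotted agent is already trapped, or to a safe node. In the second case the agent that marked it is still travelling or parked on that port. If it is parked, the port is occupied and already counted. If the edge is recurrent, that agent eventually traverses it, which contradicts the frozen configuration (cf. \Cref{lemma:SEperp_false_dangerous}). Each occupied port of $v$ hosts a stuck outgoing-sub-port agent, so it must lie on a transient edge.

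The main obstacle is turning this into a global count. $G$ restricted to safe nodes and recurrent edges is connected by the temporal connectivity hypothesis, and $a$'s node has only blocked ports. We need a safe recurrent port, marked {\SAFE}, that is empty somewhere reachable, or else an argument that the blocked ports already exhaust the budget. First I would try to show that $v$ must have a recurrent safe port (using connectivity of the recurrent safe subgraph when there are at least two safe nodes). Such a port can only be blocked by an agent on $\outport{v,\cdot}$, which must then travel by eventual transport. That contradicts the freeze, unless $v$ is the only safe node, a degenerate case handled directly: there all ports are dangerous and marked. This gives the contradiction, and the $+1$ in the agent count is exactly what forces such an agent $a$ to exist.
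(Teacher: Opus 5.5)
Your proof is correct for $s\ge 2$ and follows the paper's route. At most $2\eta'+\sum_{b\in B}\delta(b)$ agents can be trapped or parked forever on outgoing sub-ports of transient edges, so a surplus agent stays at the center of a safe node $v$. The safe recurrent port of $v$ must then eventually be crossed: by eventual transport if it is occupied, or by \Cref{lemma:SEperp_false_dangerous} if it is marked {\DANGEROUS}. In the second case, rely on the lemma itself, since the later {\SAFE} mark is a change after $t_0$; do not rely on the marking agent coming back, because it may already have crossed and found the far port {\DANGEROUS}. Your frozen-configuration framing is a more careful version of the paper's count ``at time $t$''.

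The one real error is the claim that $s=1$ can be ``handled directly''. There the lemma is false: take one safe node joined by a single recurrent edge to a black hole, with two agents. Once the first agent falls in, the second stays at the center forever with no candidate, and no agent ever travels again. So that case must be excluded explicitly, as the paper tacitly does when it asserts that every safe node has a safe recurrent incident edge.
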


\begin{lemma}\label{lemma:recurrent_edges_visited_inf}
    If a safe node $v$ is visited by an incoming agent an infinite number of times, then each safe node $w$ connected to $v$ through a recurrent edge will be visited an infinite number of times.
\end{lemma}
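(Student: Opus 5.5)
Here is the plan. We show that the rotor at $v$ is advanced infinitely often, so the rotor-router cycle eventually sends agents through every non-{\DANGEROUS} port of $v$, in particular $\port{v,w}$.

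\emph{Step 1: the port $\port{v,w}$ is eventually never {\DANGEROUS}.} Since $w$ is safe and $\{v,w\}$ is recurrent, \Cref{lemma:SEperp_dangerous} does not apply. If $\port{v,w}$ is ever marked {\DANGEROUS}, \Cref{lemma:SEperp_false_dangerous} gives that it is eventually and permanently {\SAFE}. If it stays {\UNEXPLORED}, it is always a candidate whenever it is empty. So after some round $t_0$ the port is either permanently {\SAFE} or {\UNEXPLORED}. In both cases it is a candidate whenever no agent occupies its sub-ports; by \Cref{lemma:SEperp_safe} a {\SAFE} mark never reverts.

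\emph{Step 2: agents leave $v$ through the rotor infinitely often.} Every agent arriving at $v$ lands on an ingoing sub-port. When activated, it either moves to the center or goes directly back out on the same port. In the first case it stays at the center until $\nextcandidate$ returns a port, and it then moves out and increments $\rotor_v$. Suppose only finitely many rotor increments happen at $v$. Then, after some time, every agent that enters the center of $v$ stays there forever. This is because a non-{\NaN} candidate forces a departure, and a return to $w$ uses an already-occupied port, so it does not rotate. Infinitely many arrivals would then either accumulate unboundedly many agents at $v$, contradicting finiteness of the team, or consist of agents that bounce back without reaching the center. The bounce-back case happens only when the ingoing port's mark is {\UNEXPLORED}, which changes it to {\SAFE}. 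Hence it occurs at most once per port, at most $\delta(v)$ times in total. So infinitely many arrivals give infinitely many rotor moves.

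\emph{Step 3: the rotor actually sends someone out on $\port{v,w}$, and that agent gets through.} After $t_0$, each time $\rotor_v$ passes index $\lambda_v(\port{v,w})$, the agent either takes $\port{v,w}$ or finds it non-empty. In the non-empty case, some agent is already on $\outport{v,w}$ or $\inport{v,w}$. The rotor cycles, so $\port{v,w}$ is selected infinitely often or is occupied infinitely often. Either way an agent sits on $\outport{v,w}$ infinitely often, or on $\inport{v,w}$ having come from $w$, which already means $w$ was visited. An agent on $\outport{v,w}$ never leaves except by traversing. Recurrence of $\{v,w\}$ together with the eventual transport condition makes it eventually cross into $w$. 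This yields infinitely many visits to $w$, provided the occupancies involve distinct crossing events.

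The main obstacle is the blocking case in Step 3. An agent stuck on $\outport{v,w}$ could make the port permanently non-empty so that the rotor skips it, but eventual transport rules this out. Subtler is showing that arrivals do not all bypass the center forever. I would handle this with the counting bound in Step 2, that bounce-backs happen at most $\delta(v)$ times, plus pigeonhole on finitely many agents.
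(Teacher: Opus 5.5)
Your proposal is correct and follows the same route as the paper's proof. Bounce-backs occur at most once per port, so infinitely many arrivals at $v$ force infinitely many rotor-driven departures; and since a safe recurrent port is eventually never {\DANGEROUS}, the cyclic rotor keeps sending agents through it. You are more careful than the paper, which ignores agents waiting at the center and the case of an occupied port. Your remaining proviso closes at once: each occupancy of $\outport{v,w}$ ends by eventual transport and each occupancy of $\inport{v,w}$ ends by fairness, so infinitely many blocked rotor passes over this port require infinitely many distinct crossings of $\{v,w\}$.
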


\begin{theorem}\label{th:SE_perpetual}
	$\algo^\infty$ ensures a team of at least $2\eta' + \sum_{b\in B}\delta(b) +1$ oblivious and anonymous agents arranged on the anonymous nodes of a temporally connected network 
	$\tmpG$ under the {\ssynch}-{\TRAN} model to solve $\SEperp$ without any prior knowledge of $\tmpG$.
    Eventually, $\algo^\infty$ provides a highly reliable port marking on $\tmpG$.
\end{theorem}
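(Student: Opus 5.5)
The plan is to prove the two parts separately: first that every safe node is visited infinitely often, then that the marking eventually becomes highly reliable. Both rest on \Cref{lemma:SEperp_safe}, \Cref{lemma:SEperp_dangerous}, \Cref{lemma:SEperp_false_dangerous}, \Cref{lemma:always_agent_travels} and \Cref{lemma:recurrent_edges_visited_inf}.

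\textbf{Step 1: some safe node receives incoming agents infinitely often.} By \Cref{lemma:always_agent_travels}, agents travel along edges infinitely often. Only finitely many of these traversals can enter a black hole, since each one traps an agent and there are finitely many agents. Indeed, at most $\sum_{b\in B}\delta(b)$ agents are ever trapped: a port leading to a black hole is locked as {\DANGEROUS} before its first exploration, by \Cref{lemma:SEperp_dangerous} it stays locked, and $\nextcandidate$ never returns a {\DANGEROUS} port. Hence infinitely many traversals end at safe nodes. Since $V\setminus B$ is finite, the pigeonhole principle gives a safe node $v_0$ that is entered by an incoming agent infinitely often.

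\textbf{Step 2: propagate infinite visits.} Let $H$ be the subgraph of $G$ induced by $V\setminus B$, keeping only the recurrent edges. Temporal connectivity, together with the standing assumption that journeys avoiding black holes exist from every time $t$, forces $H$ to be connected. Otherwise, after the last appearance of every transient edge, no journey could cross a cut of $H$ while staying inside the safe nodes. Starting from $v_0$ and applying \Cref{lemma:recurrent_edges_visited_inf} repeatedly along paths of $H$ would show that every safe node is visited infinitely often.

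One subtlety is that \Cref{lemma:recurrent_edges_visited_inf} needs ``visited by an incoming agent'' in its hypothesis, while it only concludes ``visited''. I would check that its conclusion actually gives incoming visits: the visits it produces arise from agents crossing the recurrent edge into $w$. If that is not immediate, I would strengthen the statement slightly so that the induction goes through. This gives $\SEperp$.

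\textbf{Step 3: highly reliable marking.} Dangerous ports are handled as follows. Every dangerous port $\port{v,b}$ of a safe node $v$ is eventually explored, because $v$ is visited infinitely often and the rotor cycles through all non-{\DANGEROUS} ports. An {\UNEXPLORED} port is always turned into {\DANGEROUS} before an agent leaves through it. By \Cref{lemma:SEperp_safe} a dangerous port can never become {\SAFE}, so by \Cref{lemma:SEperp_dangerous} it is permanently {\DANGEROUS} from some finite time on.

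Safe recurrent edges are handled next. Each such port is eventually explored by the same rotor argument, so it is marked either {\SAFE} or {\DANGEROUS}. In the first case it stays {\SAFE} by \Cref{lemma:SEperp_safe}; in the second it eventually becomes permanently {\SAFE} by \Cref{lemma:SEperp_false_dangerous}.

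Ports on safe transient edges change state only upon agent traversals, and such traversals stop once the edge has disappeared for good. Their marks are therefore eventually frozen, and the definition of a highly reliable marking allows either value for them. There are finitely many ports, so taking the maximum of the stabilization times gives a single time $t$ from which the marking never changes. Some agent is free at $t$ because the team has more than $\sum_{b\in B}\delta(b)$ agents.

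\textbf{Main obstacle.} The delicate part is the claim in Step 3 that every port of a safe node is eventually selected by $\nextcandidate$. Its choice is blocked while a port is occupied, since the sub-ports may host an agent waiting on an absent transient edge. I would argue that waiting agents sit only on outgoing sub-ports and that at most $2\eta'$ plus the trapped agents can be stuck forever. Each recurrent port is eventually emptied by eventual transport, and the rotor then advances past it. This is essentially the counting argument behind \Cref{lemma:always_agent_travels}, which I would reuse.
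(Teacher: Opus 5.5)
Your proposal is correct and follows essentially the same route as the paper: \Cref{lemma:always_agent_travels} gives a safe node entered infinitely often, \Cref{lemma:recurrent_edges_visited_inf} propagates infinite visits along the safe recurrent edges, and the rotor argument together with \Cref{lemma:SEperp_safe,lemma:SEperp_dangerous,lemma:SEperp_false_dangerous} yields the highly reliable marking, while you make explicit several points the paper leaves implicit (the pigeonhole step, the connectivity of $H$, the incoming-visit form of the lemma's hypothesis, and the existence of a free agent). One small correction: a port of a transient edge can change from {\UNEXPLORED} to {\DANGEROUS} without any traversal, namely when an agent merely steps onto its outgoing sub-port after the edge has vanished for good, so your freezing argument should instead use the fact that marks only move forward in the order {\UNEXPLORED}, {\DANGEROUS}, {\SAFE}, and hence each port's mark changes at most twice.
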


    \section{Safe Finite Exploration}\label{sec:SE_finite}
\subsection{Assumptions and limits}
We now extend the previous algorithm so that agents can detect when every node has been visited and terminate.
To prevent agents from perpetually exploring the same indistinguishable nodes, we now assume that \emph{(i)} nodes have IDs, or \emph{(ii)} there is a way to assign IDs to the nodes.
To remain consistent with the existing literature, we choose the second approach: we assume each agent has its own ID, say $a_i$ with $i\in [0,k-1]$.
As soon as $a_i$ visits a not-yet-named node $v$, it will assign a distinct ID to $v$ (by composing its agent ID and a progressive value), and it will write this ID on $\wboard_v$.
Note that assumptions \emph{(i)} and \emph{(ii)} are equivalent: even assuming \emph{(i)}, each agent can use the ID of the node where it is initially arranged as its own ID.
In case of multiplicity, the mutually exclusive access to the whiteboards can provide a simple mechanism for assigning IDs to co-located agents \cite{DobrevFPS06}.

Moreover, we assume that agents have a notebook and know $|V\setminus B|= s$; no other information is needed.
The notebook of each agent is used to maintain and disseminate the list of the already-visited nodes. Such lists are copied onto the whiteboards or merged with the existing ones.
As soon as a list contains $s$ IDs, the agent knows that all the safe nodes have been visited.
We now observe the main computational difference between $\SEfini$ and $\SEperp$.
\begin{theorem}\label{th:impossible_reliable_PM}
	Any solution for $\SEfini$ under the $\TRAN$ model cannot achieve a reliable port marking, even assuming agents/nodes with IDs and infinite memory, $\fsynch$ schedulers, and the knowledge of $k$, $|V|$, and $|B|$.
\end{theorem}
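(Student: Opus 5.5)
We argue by contradiction through network indistinguishability, as in \Cref{th:impossible_correct_PM}, but now we exploit termination. Suppose $\algo$ solves $\SEfini$ under {\fsynch}-{\TRAN} and always leaves a reliable port marking. Every terminating execution stops at some finite time $T$. After $T$ no agent moves and no whiteboard changes, so the marking present at $T$ is final. The idea is to extend a network, after $T$, so that temporal connectivity of $\tmpGSAFE$ rests on an edge whose ports were never marked {\SAFE}, or else to force a dangerous port to be left non-{\DANGEROUS}.

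\textbf{Step 1 (the base execution).} Take a simple network $\tmpG$, for instance a safe path or cycle plus a few black holes, together with the team, its IDs, $k$, $|V|$ and $|B|$. Run $\algo$ and let $T$ be the termination time. By reliability, the {\SAFE}-marked edges among safe nodes form a temporally connected spanning network, and every dangerous port is marked {\DANGEROUS}.

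\textbf{Step 2 (the modified network).} Build $\tmpG'$, with the same $|V|$, $|B|$ and $k$, such that $\tmpG[0,T]\indis\tmpG'[0,T]$ under the identity bijection. The difference lies only in edges absent during $[0,T]$. Since every port must be traversed to be marked {\SAFE} (\Cref{lemma:SEperp_safe}-style reasoning: a {\SAFE} mark at $v$ for $\port{v,w}$ requires an agent to have crossed the edge), an edge absent throughout $[0,T]$ can only be marked {\DANGEROUS} or remain unmarked. Two choices of $\tmpG'$ are possible:
\begin{itemize}
\item[(a)] Some safe edge $e=\{v,u\}$ of $\tmpG$ was used for connectivity, so we would like it to be absent in $[0,T]$. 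That is not consistent with the execution, however.
\item[(b)] Preferably, keep $\tmpG$'s edges during $[0,T]$ and, after $T$, make every {\SAFE}-marked edge transient, i.e. absent forever after $T$. Connectivity of $\tmpG'$ is then carried by an edge $\{v,u\}$ between safe nodes that was absent during $[0,T]$ and recurrent afterwards. Its ports are exactly as in $\tmpG$, where the edge did not exist. To keep degrees equal, we make $\tmpG$ contain $\{v,b\}$, $\{u,b\}$ to black holes, absent throughout, and in $\tmpG'$ we rewire them into a single edge $\{v,u\}$ plus an edge between black holes, which is legal to omit. Alternatively, we use the same port-count trick as \Cref{fig:indistinguishable_scenarios}, with $|B|$ held fixed.
\end{itemize}
Then $\tmpG'$ is temporally connected and so is its safe-journey requirement. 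Agents behave identically up to $T$ and terminate. In $\tmpG'$, $\tmpGSAFE$ after $T$ contains only transient edges, so it is not temporally connected, which is a contradiction. If instead $\algo$ had marked $\port{v,b}$ as {\SAFE} or left it unmarked in $\tmpG$, the port is dangerous there and reliability fails already in $\tmpG$.

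\textbf{Main obstacle.} The delicate point is the bookkeeping in Step 2: we must keep $|V|$, $|B|$, degrees of safe nodes and port labels identical while turning two dangerous ports into one safe edge, and we must verify the safe-journey assumption in both networks. Choosing $\tmpG$ where $v$ and $u$ each have a never-present port to the same black hole $b$ handles degrees and labels. The remaining check is that $b$ stays reachable through another recurrent edge, so that the safe-journey requirement holds in both $\tmpG$ and $\tmpG'$.
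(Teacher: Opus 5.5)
\textbf{There is a genuine gap in Step 2(b).} Your overall skeleton matches the paper's:
\begin{itemize}
\item run first on a network where the hidden ports lead to $b$;
\item use reliability there to force those ports to be {\DANGEROUS};
\item transfer the marking to an indistinguishable network in which the hidden ports form a safe edge.
\end{itemize}
The problem is that the network $\tmpG'$ you build is in general \emph{not} temporally connected. It is then not a legal instance, and no contradiction follows.

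You delete \emph{every} {\SAFE}-marked edge after $T$ and let the single rewired edge $\{v,u\}$ carry connectivity. Reliability on $\tmpG$ only says that the {\SAFE}-marked edges form a temporally connected spanning network. Nothing prevents $\algo$ from marking every safe edge of $\tmpG$ as {\SAFE}. In that case the only safe edge of $\tmpG'$ surviving after $T$ is $\{v,u\}$. That suffices only if there are exactly two safe nodes, and this case is excluded. In $\tmpG$, $v$ and $u$ must already be joined by a safe journey. With two safe nodes this means a direct edge $\{v,u\}$, so your rewired edge would be a parallel edge.

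Here is a concrete failure. Take the safe path through $v,h,u$, the hidden edges $\{v,b\},\{u,b\}$, and a recurrent edge from some safe node to $b$. Reliability forces both $\{v,h\}$ and $\{h,u\}$ to be {\SAFE}. In your $\tmpG'$, $h$ then has no safe edge after $T$ and is unreachable by any safe journey. Your ``main obstacle'' paragraph checks degrees, labels and reachability of $b$, but not this.

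\textbf{The missing idea is to make the deletion \emph{local}.} In the paper's proof, the first network is set up as follows:
\begin{itemize}
\item $v$ has exactly two ports: a static edge $\{v,h\}$ and a hidden port to $b$. The edge $\{v,h\}$ is forced to be {\SAFE}, since it is $v$'s only possible link in $\tmpGSAFE$. The hidden port is forced to be {\DANGEROUS}.
\item $u$ likewise has a hidden port to $b$.
\item $\{u,h\}$, $\{w,h\}$ and $\{w,b\}$ are static.
\end{itemize}
In the indistinguishable network, only $\{v,h\}$ disappears after termination, and the hidden pair becomes a recurrent edge $\{v,u\}$. The rest stays static, so the network remains temporally connected through $\{v,u\}$. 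Yet $v$ is isolated in $\tmpGSAFE$, because $\port{v,u}$ is {\DANGEROUS}.

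Your global deletion could also be rescued another way: hide enough pairs so that the rewired edges alone connect all safe nodes, without creating parallel edges.

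Finally, drop the appeal to \Cref{lemma:SEperp_safe}. It is a property of $\Aperp$, not of an arbitrary $\algo$, and you do not need it. As you correctly note at the end, reliability on $\tmpG$ already forces the hidden ports to be {\DANGEROUS}.
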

\begin{proof}
    The impossibility derives from the fact that agents cannot distinguish between a transient and a recurrent edge in finite time.
    Suppose by contradiction that there exists a solution $\algo$ for $\SEfini$.
    Assume $\algo$ is run on the network in \Cref{subfig:indistinguishable_recurrent_transient_a} where, from time 0, the recurrent edge $\{v,u\}$ is absent for an unpredictable but finite amount of time, while the transient edge $\{v,h\}$ is always present in that period.
    Since this network is indistinguishable from the network in \Cref{subfig:indistinguishable_recurrent_transient_b} where the dangerous edges $\{v,b\}$ and $\{u,b\}$ are absent for an unpredictable amount of time, eventually $\algo$ must terminate and so permanently mark the ports $\port{v,h}$ and $\port{h,v}$ as $\SAFE$ and the ports $\port{v,u}$ and $\port{u,v}$ as $\DANGEROUS$ in a finite $t$.
    However, the sub-network $\tmpGSAFE$ obtained by keeping only the ports marked as {\SAFE} has $\{\{v,h\}, \{u,h\}, \{w,h\}\}$ as edge-set; indeed, $\tmpGSAFE$ is not temporally connected (in fact, the edge $\{v,h\}$ will permanently disappear at time $t+1$, so that the safe node $v$ will remain indefinitely isolated).
    This contradicts the fact that $\algo$ provides a reliable port marking.
\end{proof}
\begin{figure}[bth]
  \resizebox{\textwidth}{!}{
    \captionsetup{font=normalsize,labelfont={bf,sf}}
    \captionsetup[sub]{font=footnotesize,labelfont={bf,sf}}
     \captionsetup[subfigure]{width=0.9\textwidth}

    \begin{subfigure}[t]{0.45\textwidth}
        \centering
          \scalebox{0.6}{
            \begin{tikzpicture}[on grid]  
                \def\l{2}
                \def\h{2.5}
                \node[safenode]  (v)  at (-1*\l,0)   {$v$};
                \node[safenode]   (u) at (0,0) {$u$};
                \node[safenode]   (w) at (1*\l,0) {$w$};
                \node[safenode]   (h) at (0,-1*\h) {$h$};

                \node[blackhole,label={[label distance=1mm]left:$b$}] (BH) at (0,1) {};
                \path[staticedge] (w)     edge node [above]           {}  (BH);
               
                \draw[dashed] (v) -- (h);
                \draw[staticedge] (u) -- (h);
                \draw[staticedge] (w) -- (h);
                \draw[-] (v)-- (u);

            \end{tikzpicture}
            
        }
    \caption{$\{v,h\}$ always appears in $[0,t]$ and it permanently disappears at time $t+1$; $\{v,u\}$ never appears in $[0,t]$.}
    \label{subfig:indistinguishable_recurrent_transient_a}
    \end{subfigure}
    \hfill
    
    \begin{subfigure}[t]{0.45\textwidth}
        \centering
          \scalebox{0.6}{
            \begin{tikzpicture}[on grid]  
                \def\l{2}
                \def\h{2.5}
                \node[safenode]  (v)  at (-1*\l,0)   {$v$};
                \node[safenode]   (u) at (0,0) {$u$};
                \node[safenode]   (w) at (1*\l,0) {$w$};
                \node[safenode]   (h) at (0,-1*\h) {$h$};

                \node[blackhole,label={[label distance=1mm]left:$b$}] (BH) at (0,1) {};

                \path[staticedge] (w)     edge node [above]           {}  (BH);
                \path[dashed] (u)     edge node [above]           {}  (BH);
                \path[dashed] (v)     edge node [above]           {}  (BH);
                \draw[staticedge] (v) -- (h);
                \draw[staticedge] (u) -- (h);
                \draw[staticedge] (w) -- (h);

            \end{tikzpicture}
            }
    \caption{$\{v,b\}$ and $\{u,b\}$ are absent for an unpredictable time.}
    \label{subfig:indistinguishable_recurrent_transient_b}
    \end{subfigure}
        }
\caption{Solid (dashed, resp.) straight lines represent recurrent (transient, resp.) edges. Thick solid lines represent static edges (i.e., always present).}
\label{fig:indistinguishable_recurrent_transient}
\end{figure}

However, we will show that it is possible to provide a port marking that is weakly reliable.

\subsection{Algorithm for $\SEfini$}
Under the above assumptions, we now describe our algorithm, called $\Afini$, for solving $\SEfini$.
As we will see, $\Afini$ uses $\Aperp$ as a subroutine.
\begin{center}
\footnotesize{
\renewcommand{\arraystretch}{1.4}{
    \begin{tabular}{|cccc|cc|c|c|}
        \hline
        \multicolumn{8}{|c|}{\textbf{Algorithm} $\Afini$ -- \textbf{Safe Finite Exploration}, $\SEfini$}\\
        \hline
        \textbf{Network} & \textbf{Synch} & \textbf{Initial} & \textbf{Params} & \textbf{Mem} & \textbf{WhiteB} & \textbf{AgentId} & \textbf{Port marking}\\
        \hline
        \hline
        \rowcolor{yellow!30}{\TRAN} & {\ssynch} & Any & $s$ & $n\log{n}$ & $n\log{n}$ &\greencheck& Weakly reliable\\
        \hline
    \end{tabular}
}}
\end{center}
\subparagraph{Initialization:} 
As for $\Aperp$, we assume that the whiteboard of each safe node $v$ contains the variables $\ports_v$ and $\rotor_v$.
Moreover, each $\wboard_v$ contains three variables: $\nodestatus_v$, $\nodeid_v$ and $\nodelist_v$.
In particular, $\nodestatus_v$ will be used to contain the value $\EXPLCOMPLETE$ only when the exploration of all the safe nodes has been completed; $\nodeid_v$  will contain the ID of the node, which is given by the first agent accessing $\wboard_v$; $\nodelist_v$ will contain a cumulative list of the already-visited node IDs. 

Each agent executes the algorithm $(\Afini, s, a)$ where $\Afini$ is the same deterministic algorithm parametrized w.r.t. $s =|V\setminus B|$ and the personal agent ID $a$. 
Each agent maintains in its notebook a variable $\agentstatus_a\in\{\EXPLORATION, \PROPAGATION, \TERMINATION\}$---initialized to $\EXPLORATION$---which will be used to determine the action to be performed, a variable $\agentnodelist_a$ containing the list---initially empty---of the node IDs encountered during the exploration, and an integer variable $\counter_a$---initialized to 0---which will be used to provide IDs to the nodes.
The items of the lists $\nodelist_v$ and $\agentnodelist_a$ will be in the form $(\underline{x},\texttt{bool})$ where $\underline{x}=(a,j)$ defines a node ID, while \texttt{bool} is a boolean ({\false} by default) which will be set to {\true  } only during the propagation phase.
\Cref{table:variables_SEfini} summarizes the variables used in $\Afini$.

\subparagraph{Algorithm scheme:} 
The algorithm is composed of three subsequent phases: \emph{exploration}, \emph{propagation}, and \emph{termination}.
During the exploration phase, the agents continue exploring the nodes of the network essentially by executing the same strategy as in $\Aperp$. 
In addition, the agents provide nodes with IDs: such IDs are collected in the $\agentnodelist$ variables and distributed along the network in the $\nodelist$ variables.
This phase ends as soon as an agent has collected all the $s$ node IDs.

At this point, the agents that have detected that all safe nodes have been visited at least once (since these agents have already collected all the $s$ node IDs) must propagate this information to all other nodes.
To keep track of the nodes on which the information has already been propagated, the agents set the corresponding \texttt{bool} value to true in the lists $\nodelist$ and $\agentnodelist$.
Eventually, all the other $\nodelist$ variables will contain all the $s$ nodes in the form $(\underline{x},\texttt{true})$: then, the agents will be aware of the algorithm termination.

More formally, let $a$ be an agent activated in a safe node $v$. 
We now describe in detail the actions of $a$ when executing $(\algo, m, a)$ in one LCM cycle. 
Refer to \Cref{algo:SE_finite} in \Cref{appendix:pseudocodes}.
\begin{itemize}
    \item \textbf{Exploration.} (case $\agentstatus_a = \EXPLORATION$)
        \begin{itemize}
            \item if $\nodestatus_v \neq \EXPLCOMPLETE$, then
                \begin{itemize}
                	\item if $\nodeid_v$ is undefined, then $a$ gives a name to $v$ by updating the variable $\nodeid_v$.
                		In particular, $a$ writes $\nodeid_v \gets (a, j)$ on $\wboard_v$ where $j=0,1,2,\dots$ is the value of $\counter_a$.
                		After that, $a$ increases the value of the counter by one, and adds to $\agentnodelist_a$ the value $(a,j, \texttt{false})$.
                	\item $a$ computes $\ell = \nodelist_v \cup \agentnodelist_a$ (i.e., the comprehensive list of all the distinct node IDs appearing in the two variables) and updates the two lists: $\nodelist_v \gets \ell$ and $\agentnodelist_a\gets \ell$.
                	\item if $\ell$ contains $s$ node IDs, then $a$ updates $\agentstatus_a\gets \PROPAGATION$; at this point, the propagation phase starts. Otherwise (i.e., $\ell$ contains fewer than $s$ node IDs), the algorithm executes $\Aperp$ to continue the exploration.
                \end{itemize}
            \item otherwise (i.e., if $\nodestatus_v = \EXPLCOMPLETE$), $a$ sets $\agentstatus_a\gets \PROPAGATION$.
        \end{itemize}
    \item \textbf{Propagation:} (case $\agentstatus_a = \PROPAGATION$)
        \begin{itemize}
        	\item $a$ writes $\nodestatus_v\gets \EXPLCOMPLETE$ and sets $\texttt{bool}\gets \texttt{true}$ for the entry of node $v$ in $\agentnodelist_a$. Then, $a$ computes $\ell$ as the merge of $\agentnodelist_a$ and $\nodelist_v$. If the same node ID appears in both lists with two different \texttt{bool} values, that node ID is copied in $\ell$ only with the \texttt{true} entry;
            \item $a$ writes $\agentnodelist_a\gets \ell$ and $\nodelist_v\gets \ell$;
            \item if all the $s$ entries of $\agentnodelist_a$ have \texttt{bool}=\texttt{true}, then $a$ sets $\agentstatus_a\gets \TERMINATION$; otherwise $a$ executes $\Aperp$ to continue the propagation.
        \end{itemize}
    \item \textbf{Termination:} (case $\agentstatus_a = \TERMINATION$) Agent $a$ is aware that all the nodes have been visited at least once and are marked as $\EXPLCOMPLETE$. So, $a$ goes to the center of $v$ and terminates. Eventually, all free agents will set $\agentstatus\gets\TERMINATION$ and will terminate at the center of some safe nodes. The problem is solved.
\end{itemize}

\subsection{Analysis of $\Afini$}
We prove the properties and the correctness of $\Afini$ in solving $\SEfini$ (proofs in \Cref{sec:proofs_Afini}).

\begin{lemma}\label{lemma:SEfini_Exploration_finite}
    Consider a team of at least $2\eta' + \sum_{b\in B}\delta(b) +1$ agents executing $\Afini$ on a network $\tmpG = (G=(V,E), \Lambda, \varrho, B)$. Then, the Exploration phase is finite and ensures that all safe nodes are visited at least once and that at least one agent remains free.
\end{lemma}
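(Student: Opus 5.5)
The plan is to reduce the Exploration phase of $\Afini$ to an execution of $\Aperp$ and then import \Cref{th:SE_perpetual}. Observe first that, as long as every agent has $\agentstatus=\EXPLORATION$ and no whiteboard holds $\EXPLCOMPLETE$, each activation performs some bookkeeping on $\nodeid_v$, $\nodelist_v$ and $\agentnodelist_a$. These updates never touch $\rotor_v$ or $\ports_v$. After them, the agent executes exactly the $\Aperp$ step. So the movement behaviour, the rotor-router, and the port marking are identical to a run of $\Aperp$ under the same scheduler and edge schedule. Persistent memory is used only for the ID lists; since $\Aperp$'s decisions depend only on position, $\rotor_v$, $\ports_v$ and the arrangement $\phi_v$, the movement trajectory coincides.

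Safety follows first. In $\Aperp$, an agent is trapped only by traversing a port marked \UNEXPLORED, which it locks to \DANGEROUS before leaving. By \Cref{lemma:SEperp_safe}, a \SAFE port never leads to a black hole, and by \Cref{lemma:SEperp_dangerous}, a \DANGEROUS port towards a black hole is never reopened. Hence each dangerous port absorbs at most one agent, so at most $\sum_{b\in B}\delta(b)$ agents are ever trapped. With at least $2\eta'+\sum_{b\in B}\delta(b)+1$ agents, at least one agent remains free. This bound depends only on the port-locking rule, so it also holds after some agents switch to Propagation, which still moves via $\Aperp$.

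Next comes visitation and finiteness. Suppose, for contradiction, that the Exploration phase never ends, i.e.\ no agent ever reaches a list with $s$ IDs. Then the run is an infinite $\Aperp$ run, and by \Cref{th:SE_perpetual} every safe node is visited infinitely often. Each visited node receives an ID at its first visit, and that ID is permanent. It remains to show that some list eventually collects all $s$ IDs, which I expect to be the main obstacle, because IDs must be carried by agents along journeys. I would argue it by monotonicity. Lists only grow under the union rule, so each $\nodelist_v$ and $\agentnodelist_a$ eventually stabilises at some maximal set, and every free agent's list is eventually constant.

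Consider a safe node $v$ visited infinitely often by incoming agents. Every agent arriving at $v$ merges its list with $\nodelist_v$, so every agent leaving $v$ afterwards carries $\nodelist_v$. By \Cref{lemma:recurrent_edges_visited_inf}, such an agent eventually reaches each neighbour $w$ across a recurrent edge, and merges there. Therefore stabilised lists of neighbours across recurrent safe edges contain each other, so they are equal. The safe recurrent edges form a connected spanning subgraph on the safe nodes: transient edges vanish eventually, and temporal connectivity avoiding black holes persists. Hence all stabilised node lists coincide, and this common list contains every assigned ID. It thus has $s$ entries, and the next agent to merge at any node sets $\PROPAGATION$, a contradiction.

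Alternatively, the phase ends because some node becomes $\EXPLCOMPLETE$, which requires an earlier agent to have collected $s$ IDs. In either case, at the end all $s$ distinct IDs exist, so all safe nodes have been visited at least once. IDs are distinct because they are built from the unique agent ID and a counter, and no black hole is named, since agents there are destroyed.
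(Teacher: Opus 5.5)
Your proof is correct and has the same overall structure as the paper's. Both argue by contradiction that a never-ending Exploration phase is an infinite run of $\Aperp$, and both combine \Cref{lemma:recurrent_edges_visited_inf} with the fact that temporal connectivity forces the safe recurrent edges to connect all safe nodes. The key step is different, though.

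The paper asserts that if no agent ever collects all IDs, the nodes and agents split into two sub-teams visiting disjoint node sets. It then contradicts this with a recurrent edge across the cut, but never explains why failing to collect $s$ IDs should produce such a partition.

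You use monotonicity instead:
\begin{itemize}
\item The lists are subsets of a finite set of at most $s$ IDs and only grow, so they stabilise.
\item Every activation merges the lists, including activations on an outgoing sub-port. Hence any crossing of an edge after stabilisation forces the lists at its two endpoints to coincide.
\item Connectivity of the safe recurrent subgraph then yields a single common list containing all $s$ IDs.
\end{itemize}
This justifies the information-flow claim that the paper takes for granted. Your count of at most one trapped agent per dangerous port, via \Cref{lemma:SEperp_safe,lemma:SEperp_dangerous} and the locking rule, also makes the ``at least one free agent'' clause explicit rather than inherited from $\Aperp$. The paper's version is only shorter.

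Two small tightenings:
\begin{itemize}
\item What you need is that every safe recurrent edge is crossed infinitely often in some direction. This is the form used in the proof of \Cref{th:SE_perpetual}, starting from a node entered infinitely often as given by \Cref{lemma:always_agent_travels}. The literal statement of \Cref{lemma:recurrent_edges_visited_inf} only promises visits to $w$.
\item ``Such an agent eventually reaches each neighbour'' should read ``some agent leaving $v$ after stabilisation crosses to $w$''.
\end{itemize}
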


\begin{lemma}\label{lemma:SEfini_Propagation_finite}
    The Propagation phase is finite, ensures that all safe nodes have $\nodestatus$ marked as {\EXPLCOMPLETE} and that at least one agent remains free.
\end{lemma}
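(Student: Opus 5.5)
We need to prove Lemma \ref{lemma:SEfini_Propagation_finite}: the Propagation phase is finite, every safe node eventually has $\nodestatus$ equal to $\EXPLCOMPLETE$, and at least one agent stays free.

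By Lemma \ref{lemma:SEfini_Exploration_finite}, there is a finite round $t_0$ at which some free agent $a^*$ has a list $\ell$ with all $s$ safe node IDs and switches to $\PROPAGATION$. Every safe node already carries an ID from then on. In the Propagation phase, agents that are not in $\TERMINATION$ keep executing $\Aperp$ for movement and port handling. Agents in $\EXPLORATION$ also keep executing it until they meet an $\EXPLCOMPLETE$ node or collect $s$ IDs. So the set of non-terminated free agents moves exactly as under $\Aperp$. The only exception is agents switching to $\TERMINATION$, and those sit at the center of a node and never again occupy a port.

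The first step is to show that $\EXPLCOMPLETE$ flags spread to all safe nodes. Suppose, for contradiction, that no agent ever terminates before all safe nodes are flagged. Then all free agents run $\Aperp$ forever. By the same arguments as Lemmas \ref{lemma:always_agent_travels} and \ref{lemma:recurrent_edges_visited_inf}, together with Theorem \ref{th:SE_perpetual}, every safe node is visited infinitely often, and in particular entered infinitely often via recurrent safe edges. I will then argue by induction along recurrent safe edges, which connect the safe nodes by temporal connectivity. Once a node $v$ is flagged, every agent that later visits $v$ becomes a propagator. Such an agent leaves $v$ only through an outgoing port. By the rotor-router, some agent leaving $v$ eventually traverses each recurrent safe edge $\{v,w\}$, and it flags $w$ on arrival. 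An agent entering a black hole is lost, but those losses are bounded by the dangerous ports, exactly as in $\Aperp$.

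The second step handles termination. The main obstacle is that agents terminating early reduce the team size, which might break Lemma \ref{lemma:always_agent_travels}. I would resolve it as follows. An agent sets $\TERMINATION$ only when all $s$ entries are \texttt{true}. An entry becomes \texttt{true} only when a propagator writes $\EXPLCOMPLETE$ at that node. Hence the first agent to terminate certifies that all safe nodes were already flagged, so the phase has completed its goal at that moment. Before that moment no agent has terminated, so the full team size bound holds and the argument above applies. Finiteness of the phase, meaning that all free agents eventually reach $\TERMINATION$, then follows by spreading the all-\texttt{true} list. Merging lists at nodes keeps \texttt{true} entries monotone, and the same visiting argument applies, though I would need care that the remaining moving agents still suffice. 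If needed, I would show that terminated agents never block candidate ports because they sit at node centers.

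Freedom of at least one agent comes from the bound on trapped agents. Each dangerous port is marked $\DANGEROUS$ by its first explorer and, by Lemma \ref{lemma:SEperp_dangerous}, stays so. Hence at most $\sum_{b\in B}\delta(b)$ agents are ever trapped, and the team has strictly more agents than that.
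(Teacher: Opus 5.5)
\paragraph{The gap: finiteness of the phase.} Your handling of the flags and of freedom holds up, and in places it is tighter than the paper's. A \texttt{true} entry for a node can only originate from a propagator writing $\EXPLCOMPLETE$ there. Hence no agent can terminate before every safe node is flagged, which removes the team-size circularity. Your direct count is also a better freedom argument than a bare appeal to $\Aperp$: each dangerous port is locked by its first explorer and stays $\DANGEROUS$ by \Cref{lemma:SEperp_dangerous}, so at most $\sum_{b\in B}\delta(b)$ agents are ever trapped.

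The gap is finiteness itself. The phase ends when some agent holds all $s$ entries set to \texttt{true}, and you never show that this happens.
\begin{itemize}
\item Step one proves that every safe node gets \emph{flagged}. But the \texttt{true} entries are born locally, one at each of $s$ different nodes, and they still have to be aggregated into a single list.
\item Step two only shows that \emph{if} some agent terminates, flagging was already complete.
\end{itemize}
The sentence ``finiteness then follows by spreading the all-\texttt{true} list'' presupposes the very list whose existence is at stake. The concern you raise yourself, whether the remaining agents suffice, is left open.

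\paragraph{How to close it.} The missing step follows from your own flooding argument. It plays the role of the paper's appeal to ``the same argument'' as \Cref{lemma:SEfini_Exploration_finite}.
\begin{itemize}
\item Suppose no agent ever collects $s$ \texttt{true} entries. Then no agent ever terminates, so the dynamics are those of $\Aperp$ forever and the full team-size bound holds.
\item Fix a safe node $x$, and assume, as you do, that merges are unions in which \texttt{true} dominates. Once a propagator at $x$ has written $(x,\texttt{true})$ on $\nodelist_x$, consider any propagator leaving a node whose list contains this entry. It merges the entry in the very activation in which it departs, since the Move step follows its Compute step, and it deposits the entry on arrival.
\item Each agent switches from the exploration status at most once, so all but finitely many departures are by such propagators. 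Since recurrent safe edges are traversed infinitely often, your induction floods $(x,\texttt{true})$ to every safe node.
\item Doing this for every $x$, the next agent to merge anywhere obtains all $s$ \texttt{true} entries, a contradiction.
\end{itemize}

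\paragraph{The stronger claim is a different lemma.} The property you set out to prove, that \emph{all} free agents reach $\TERMINATION$, is not part of this lemma; it belongs to \Cref{lemma:SEfini_Termination_finite}. Your worry about it is well founded. Once the mobile agents have terminated, agents parked on outgoing ports of transient edges only ever read their own node's whiteboard. Nothing in the algorithm forces an all-\texttt{true} list to be delivered there.
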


\begin{lemma}\label{lemma:SEfini_Termination_finite}
    The Termination phase is finite and ensures that all free agents, of which are at least one, are at the center of a safe node.
\end{lemma}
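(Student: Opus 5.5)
Here we must show that the Termination phase ends in finite time, and that when it ends every free agent is at the center of a safe node, with at least one such agent. The plan is to combine three facts: \Cref{lemma:SEfini_Propagation_finite}, the monotonicity of the propagation data, and the eventual-transport and fairness assumptions.

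\textbf{Step 1: all nodes eventually carry full information.} By \Cref{lemma:SEfini_Propagation_finite}, there is a finite round $t_0$ after which every safe node has $\nodestatus=\EXPLCOMPLETE$ and at least one agent is free. Some agent $a^*$ is the first to reach $\agentstatus=\TERMINATION$; at that moment its list holds all $s$ IDs with \texttt{bool}$=$\texttt{true}. It writes this list on the whiteboard of its current node $v^*$ before stopping. The merge rule only ever turns \texttt{false} into \texttt{true} and never removes IDs, so $\nodelist_{v^*}$ stays complete and all-\texttt{true} forever.

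\textbf{Step 2: every other free agent switches to $\TERMINATION$.} Take any free agent $a$ still in $\EXPLORATION$ or $\PROPAGATION$. When activated on any safe node $v$ after $t_0$, it sees $\EXPLCOMPLETE$ and switches (or already is) in $\PROPAGATION$. It then keeps executing $\Aperp$ while merging lists at every node it reaches. The difficulty is that terminated agents no longer move, so I cannot directly invoke \Cref{lemma:always_agent_travels}. I would handle this in two ways:
\begin{itemize}
  \item Argue that a node holding a complete all-\texttt{true} list passes it to any agent that reads it, so a terminating agent enlarges the set of nodes with complete lists.
  \item Show that each non-terminated free agent keeps moving: it is never stuck forever at an outgoing port of a recurrent edge (eventual transport). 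Its rotor-router walk then behaves as in the proofs of \Cref{lemma:always_agent_travels} and \Cref{lemma:recurrent_edges_visited_inf}, with terminated agents sitting at node centers, where they block no ports.
\end{itemize}
Recurrent safe edges stay usable because their ports end up $\SAFE$ (\Cref{lemma:SEperp_false_dangerous}). So a moving agent visits, through recurrent edges, every safe node infinitely often, in particular a node with a complete list. There it merges and becomes aware. Agents waiting at ports of transient edges are the delicate case, and this is the main obstacle. I would resolve it by counting: the team size $2\eta'+\sum_b\delta(b)+1$ reserves enough agents that a non-terminated agent is never permanently blocked, since an empty candidate port pointing along a recurrent safe edge always eventually exists. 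Otherwise I would argue that an agent stuck forever at a transient port contradicts the counting argument of \Cref{lemma:always_agent_travels}.

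\textbf{Step 3: final positions.} Once an agent is in $\TERMINATION$, its next activation moves it to the center of its current node and it stays there permanently. An agent in $\TERMINATION$ is always located at a safe node, because it reached that status only while active on a safe node. Fairness guarantees that activation happens. There are finitely many agents, so after a finite round every free agent is terminated at the center of a safe node. At least one free agent exists because \Cref{lemma:SEperp_safe} ensures no agent in this phase follows an unverified port into a black hole beyond those already lost.
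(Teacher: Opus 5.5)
\textbf{There is a genuine gap in Step~2, exactly where you locate the main obstacle, and neither proposed fix works.}

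The counting in \Cref{lemma:always_agent_travels} only guarantees that \emph{at least one} agent is not stuck. It is fully compatible with up to $2\eta'+\sum_{b\in B}\delta(b)$ agents being trapped, or parked forever on outgoing sub-ports of transient edges. So a permanently parked agent contradicts nothing. The existence of other candidate ports does not help such an agent either: an agent on $\outport{v,w}$ never re-chooses a port (in $\Aperp$ it does nothing there), and eventual transport says nothing about transient edges. A parked free agent can therefore become aware only by reading a complete all-\texttt{true} list on the whiteboard of its own node.

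Your first bullet does not supply this. An agent switches to $\TERMINATION$ the moment its own list is complete, and then stays at that node. So a complete list is never carried anywhere; it exists only at nodes where some agent terminated. Once every agent that still moves has terminated, nothing reaches the nodes of the parked agents.

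This is not just a missing step; the claim fails for $\Afini$ as specified. Take a triangle $u,v,w$ with $\{u,v\},\{u,w\}$ static and $\{v,w\}$ absent from time $0$, so $B=\varnothing$, $\eta'=1$, $k=3$. Place the agents at $v$, $w$, $u$. Let port $0$ at both $v$ and $w$ lie on $\{v,w\}$, and let port $0$ at $u$ lead to $v$.
\begin{itemize}
\item In round $0$, the agents at $v$ and $w$ park on $\outport{v,w}$ and $\outport{w,v}$.
\item The third agent collects all three IDs on reaching $w$, and starts propagating there while holding only $w$ as \texttt{true}.
\item It then returns to $u$, goes to $v$, and terminates there, writing the complete list only on $\wboard_v$.
\item The agent parked at $v$ reads this list and terminates.
\item The agent parked at $w$ only ever sees the list left at $w$, so it stays free on an outgoing sub-port forever.
\end{itemize}

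Hence no refinement of your argument can prove the statement. The paper's own proof gets past this point only by asserting that every node's whiteboard already holds the full all-\texttt{true} list. That assertion has no support in \Cref{lemma:SEfini_Propagation_finite}. A fix requires changing the algorithm so that the termination information is guaranteed to reach agents parked on outgoing sub-ports.

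Separately, in Step~3 the existence of a free agent does not follow from \Cref{lemma:SEperp_safe}, because propagating agents still probe {\UNEXPLORED} ports. It follows from counting instead. Each dangerous port is locked as {\DANGEROUS} before its single probe and is never unlocked. So at most $\sum_{b\in B}\delta(b)<k$ agents are ever trapped.
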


\begin{theorem}\label{th:SE_finite}
	$\Afini$ ensures a team of at least $2\eta' + \sum_{b\in B}\delta(b) +1$ agents with IDs arranged on the anonymous nodes of a temporally connected network 
	$\tmpG$ under the {\ssynch}-{\TRAN} model to solve $\SEfini$ with explicit termination, only knowing $s=|V\setminus B|$.
    Eventually, $\Afini$ provides a \emph{weakly reliable} port marking on $\tmpG$.
\end{theorem}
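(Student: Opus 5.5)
The plan is to assemble \Cref{th:SE_finite} from the three phase lemmas and then argue the port-marking property separately. By \Cref{lemma:SEfini_Exploration_finite}, with at least $2\eta' + \sum_{b\in B}\delta(b) +1$ agents the Exploration phase ends in finite time. At that moment every safe node has been visited and at least one agent is free. The trigger for leaving Exploration is that some agent's merged list $\ell$ contains $s$ distinct IDs. Since IDs are written only on safe nodes (agents in black holes do nothing) and each node receives at most one ID, having $s$ IDs certifies that all safe nodes were visited. This is the point where knowledge of $s$ is used.

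Next, \Cref{lemma:SEfini_Propagation_finite} gives that in finite time every safe node has $\nodestatus=\EXPLCOMPLETE$, with at least one free agent remaining. \Cref{lemma:SEfini_Termination_finite} then gives that every free agent eventually sets $\agentstatus\gets\TERMINATION$ and stops at the center of a safe node. Together these yield all four conditions of $\SEfini$: a free agent exists, all safe nodes have been visited, free agents sit at centers, and no agent acts afterwards.

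For explicit termination, I will check the ordering requirement that awareness occurs only after $\SEfini$ is solved. An agent writes $\TERMINATION$ only when all $s$ entries of its list carry \texttt{true}. A flag is set to \texttt{true} only by a \PROPAGATION\ agent standing on that node, and such an agent exists only after some list reached $s$ IDs, that is, after the exploration requirement was already met. For the subtler point that awareness also postdates the stopping of agents, I would argue that an agent which has written $\TERMINATION$ performs no further moves, so its own configuration is final. Any remaining \EXPLORATION/\PROPAGATION\ agents are handled by the finiteness in \Cref{lemma:SEfini_Termination_finite}. This is where I would be careful about the interpretation of "awareness after solving".

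For weak reliability, I invoke \Cref{lemma:SEperp_safe}, because the port states are modified only inside the $\Aperp$ subroutine with the same rules. Hence any \SAFE-marked port leads to a truly safe node and stays \SAFE\ forever. After all free agents terminate, nobody writes, so all markings, including \DANGEROUS\ and \UNEXPLORED, are frozen; trapped agents never write. Combined with the existence of a free agent, this gives a weakly reliable port marking.

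The main obstacle is making sure that the interleaving of $\Afini$'s extra bookkeeping with $\Aperp$ does not break the invariants behind \Cref{lemma:SEperp_safe} and the progress lemmas. A specific risk is that agents which stopped early (in \TERMINATION) might block rotor-router progress for the agents still propagating. Since this is already encapsulated in \Cref{lemma:SEfini_Propagation_finite} and \Cref{lemma:SEfini_Termination_finite}, I will rely on them and only verify that the marking rules are untouched.
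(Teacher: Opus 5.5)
Your proposal follows the paper's proof essentially step for step: you chain \Cref{lemma:SEfini_Exploration_finite,lemma:SEfini_Propagation_finite,lemma:SEfini_Termination_finite} to get the conditions of $\SEfini$ and explicit termination, and you obtain weak reliability from \Cref{lemma:SEperp_safe}, exactly as the paper does (it cites that lemma in the Propagation proof and restates it in the theorem via the remark on $\tmpGSAFE$). Your extra checks---that $s$ distinct IDs certify full exploration, that \texttt{true} flags can only arise after exploration is complete, and that marks freeze once agents stop---are sound refinements the paper leaves implicit, with one small imprecision: an agent that writes $\TERMINATION$ while sitting on an outgoing sub-port is still carried across the edge in that round's Move step if the edge is present, so it is not true that it makes no further moves after becoming aware.
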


    \section{Conclusions}
In this paper, we have extended the existing study on exploration of \emph{arbitrary} and \emph{unknown} dynamic networks by considering the possibility that the network is dangerous, namely, it may contain \emph{black holes}.
In particular, we have studied the capability of a team of \emph{semi-synchronous} agents to solve the \probfont{Safe Perpetual Exploration} ($\SEperp$) and the \probfont{Safe Finite Exploration} ($\SEfini$) problems, under the \emph{temporal connectivity} assumption, i.e., the minimal connectivity condition such that a non-trivial problem in a temporal graph can be solved.
Interestingly, $\SEperp$ can be solved under a very weak setting (anonymous nodes and agents, oblivious agents starting from any initial configuration, no prior knowledge of the team or the network).
For $\SEfini$, we propose an algorithm that makes agents explicitly terminate the exploration by assuming agents have IDs and a limited personal memory, only knowing the number of safe nodes.
A first future work may investigate whether $\SEfini$ can be solved under weaker assumptions.
Another interesting research direction is to consider more adversarial settings of dynamic networks; e.g., settings in which some edges may dynamically change one of their endpoint ports.

    \bibliography{refs}

@incollection{Das19_BOOK,
  author       = {Shantanu Das},
  title        = {Graph Explorations with Mobile Agents},
  booktitle    = {Distributed Computing by Mobile Entities, Current Research in Moving and Computing},
  series       = {Lecture Notes in Computer Science},
  pages        = {403--422},
  publisher    = {Springer},
  year         = {2019},
  doi          = {10.1007/978-3-030-11072-7_16}
}

@article{DevismesDL26,
  author       = {St{\'{e}}phane Devismes and
                  Yoann Dieudonn{\'{e}} and
                  Arnaud Labourel},
  title        = {Graph Exploration: The Impact of a Distance Constraint},
  journal      = {Algorithmica},
  volume       = {88},
  number       = {2},
  pages        = {24},
  year         = {2026},
  doi          = {10.1007/S00453-026-01378-4}
}

@inproceedings{BockenhauerFUW23,
  author       = {Hans{-}Joachim B{\"{o}}ckenhauer and
                  Fabian Frei and
                  Walter Unger and
                  David Wehner},
  title        = {Zero-Memory Graph Exploration with Unknown Inports},
  booktitle    = {Procs. of the 30th International Colloquium on Structural Information and Communication Complexity (SIROCCO)},
  pages        = {246--261},
  publisher    = {Springer},
  year         = {2023},
  doi          = {10.1007/978-3-031-32733-9_11}
}

@article{PattanayakP24,
  author       = {Debasish Pattanayak and Andrzej Pelc},
  title        = {Graph exploration by a deterministic memoryless automaton with pebbles},
  journal      = {Discrete Applied Mathematics},
  volume       = {356},
  pages        = {149--160},
  year         = {2024},
  doi          = {10.1016/J.DAM.2024.05.024}
}

@article{GotohFMS21,
  author       = {Tsuyoshi Gotoh and
                  Paola Flocchini and
                  Toshimitsu Masuzawa and
                  Nicola Santoro},
  title        = {Exploration of dynamic networks: Tight bounds on the number of agents},
  journal      = {Journal of Computer and System Sciences},
  volume       = {122},
  pages        = {1--18},
  year         = {2021},
  doi          = {10.1016/J.JCSS.2021.04.003}
}

@inproceedings{DereniowskiKPU14,
  author       = {Dariusz Dereniowski and Adrian Kosowski and Dominik Pajak and Przemyslaw Uznansk},
  title        = {Bounds on the Cover Time of Parallel Rotor Walks},
  booktitle    = {Procs. of the  31st International Symposium on Theoretical Aspects of Computer Science (STACS) },
  pages        = {263--275},
  publisher    = {LIPIcs},
  year         = {2014},
  doi          = {10.4230/LIPIcs.STACS.2014.263}
}

@article{KosowskiP19,
author       = {Adrian Kosowski and Dominik Pajak},
  title        = {Does adding more agents make a difference? {A} case study of cover time
for the rotor-router},
  journal      = {Journal of Computer and System Sciences},
  volume       = {106},
  pages        = {80--93},
  year         = {2019},
  doi          = {10.1016/J.JCSS.2019.07.001}
}

@misc{BileskiM26_arxiv,
      title={Don't Be Afraid to Die: Black Hole Search in Dynamic Graphs with Fewer Agents}, 
      author={Kass Bileski and Avery Miller},
      year={2026},
      eprint={2609.06850},
      archivePrefix={arXiv},
      primaryClass={cs.DC},
      url={https://arxiv.org/abs/2609.06850}, 
}

@inproceedings{KaurS26,
  author       = {Tanvir Kaur and
                  Ashish Saxena},
  title        = {When Agents are Powerful: Black Hole Search in Time-Varying Graphs},
  booktitle    = {Procs. of the 22nd International
                  Conference on Distributed Computing and Intelligent Technology ({ICDCIT})},
  volume       = {16420},
  pages        = {3--18},
  publisher    = {Springer},
  year         = {2026},
  doi          = {10.1007/978-3-032-16632-6_1}
}

@inproceedings{ShimoyamaSKM22,
  author       = {Kohei Shimoyama and
                  Yuichi Sudo and
                  Hirotsugu Kakugawa and
                  Toshimitsu Masuzawa},
  title        = {Invited Paper: One Bit Agent Memory is Enough for Snap-Stabilizing
                  Perpetual Exploration of Cactus Graphs with Distinguishable Cycles},
  booktitle    = {Procs. of the 24th International Symposium on Stabilization, Safety, and Security of Distributed Systems (SSS)},
  pages        = {19--34},
  publisher    = {Springer},
  year         = {2022},
  doi          = {10.1007/978-3-031-21017-4_2}
}

@article{MandalMM23,
  author       = {Subhrangsu Mandal and
                  Anisur Rahaman Molla and
                  William K. Moses Jr.},
  title        = {Efficient live exploration of a dynamic ring with mobile robots},
  journal      = {Theoretical Computer Science},
  volume       = {980},
  pages        = {114201},
  year         = {2023},
  doi          = {10.1016/J.TCS.2023.114201}
}

@article{LunaDFS20,
  author       = {Giuseppe Antonio Di Luna and
                  Stefan Dobrev and
                  Paola Flocchini and
                  Nicola Santoro},
  title        = {Distributed exploration of dynamic rings},
  journal      = {Distributed Computing},
  volume       = {33},
  number       = {1},
  pages        = {41--67},
  year         = {2020},
  doi          = {10.1007/S00446-018-0339-1}
}

@inproceedings{GotohSOKM18,
  author       = {Tsuyoshi Gotoh and
                  Yuichi Sudo and
                  Fukuhito Ooshita and
                  Hirotsugu Kakugawa and
                  Toshimitsu Masuzawa},
  title        = {Group Exploration of Dynamic Tori},
  booktitle    = {Procs. of the 38th {IEEE} International Conference on Distributed Computing Systems (ICDCS)},
  pages        = {775--785},
  publisher    = {{IEEE} Computer Society},
  year         = {2018},
  doi          = {10.1109/ICDCS.2018.00080}
}

@inproceedings{ErlebachKLSS19,
  author       = {Thomas Erlebach and
                  Frank Kammer and
                  Kelin Luo and
                  Andrej Sajenko and
                  Jakob T. Spooner},
  title        = {Two Moves per Time Step Make a Difference},
  booktitle    = {Procs. of the 46th International Colloquium on Automata, Languages, and Programming (ICALP)},
  pages        = {141:1--141:14},
  year         = {2019},
  doi          = {10.4230/LIPICS.ICALP.2019.141}
}

@inproceedings{ErlebachS18,
  author       = {Thomas Erlebach and Jakob T. Spooner},
  title        = {Faster Exploration of Degree-Bounded Temporal Graphs},
  booktitle    = {Procs. of the 43rd International Symposium on Mathematical Foundations of Computer Science (MFCS)},
  year         = {2018},
  doi          = {10.4230/LIPICS.MFCS.2018.36}
}

@article{IlcinkasW18,
  author       = {David Ilcinkas and
                  Ahmed Mouhamadou Wade},
  title        = {Exploration of the T-Interval-Connected Dynamic Graphs: the Case of
                  the Ring},
  journal      = {Theory of Computing Systems},
  volume       = {62},
  number       = {5},
  pages        = {1144--1160},
  year         = {2018},
  doi          = {10.1007/S00224-017-9796-3}
}

@article{MichailS16,
  author       = {Othon Michail and
                  Paul G. Spirakis},
  title        = {Traveling salesman problems in temporal graphs},
  journal      = {Theoretical Computer Science},
  volume       = {634},
  pages        = {1--23},
  year         = {2016},
  doi          = {10.1016/J.TCS.2016.04.006}
}

@inproceedings{Erlebach0K15,
  author       = {Thomas Erlebach and
                  Michael Hoffmann and
                  Frank Kammer},
  title        = {On Temporal Graph Exploration},
  booktitle    = {Procs. of the 42nd International Colloquium on Automata, Languages, and Programming (ICALP)},
  pages        = {444--455},
  publisher    = {Springer},
  year         = {2015},
  doi          = {10.1007/978-3-662-47672-7_36}
}

@inproceedings{KuhnLO10,
  author       = {Fabian Kuhn and
                  Nancy A. Lynch and
                  Rotem Oshman},
  title        = {Distributed computation in dynamic networks},
  booktitle    = {Procs. of the 42nd {ACM} Symposium on Theory of Computing (STOC)},
  pages        = {513--522},
  publisher    = {{ACM}},
  year         = {2010},
  doi          = {10.1145/1806689.1806760}
}

@article{DasFKNS07,
  author       = {Shantanu Das and
                  Paola Flocchini and
                  Shay Kutten and
                  Amiya Nayak and
                  Nicola Santoro},
  title        = {Map construction of unknown graphs by multiple agents},
  journal      = {Theoretical Computer Science},
  volume       = {385},
  number       = {1-3},
  pages        = {34--48},
  year         = {2007},
  doi          = {10.1016/J.TCS.2007.05.011}
}

@incollection{MarkouS19,
  author       = {Euripides Markou and
                  Wei Shi},
  title        = {Dangerous Graphs},
  booktitle    = {Distributed Computing by Mobile Entities, Current Research in Moving
                  and Computing},
  pages        = {455--515},
  publisher    = {Springer},
  year         = {2019},
  doi          = {10.1007/978-3-030-11072-7_18}
}

@misc{SaRMMoSh26,
  author       = {Ashish Saxena and Anisur Rahaman Molla and Kaushik Mondal and Gokarna Sharma},
  title        = {Semi-Synchronous Exploration in Dynamic Graphs},
  year         = {2026},
  eprint       = {arXiv:2605.14375},
  archivePrefix = {arXiv},
  url          = {https://arxiv.org/abs/2605.14375}
}

@inproceedings{KaurSMM25,
  author       = {Tanvir Kaur and
                  Ashish Saxena and
                  Partha Sarathi Mandal and
                  Kaushik Mondal},
  title        = {Black Hole Search in Dynamic Graphs},
  booktitle    = {Procs. of the 26th International Conference on Distributed Computing and Networking (ICDCN)},
  pages        = {221--230},
  publisher    = {{ACM}},
  year         = {2025},
  doi          = {10.1145/3700838.3700869}
}

@inproceedings{KaurSMM25_SSS,
  author       = {Tanvir Kaur and
                  Ashish Saxena and
                  Partha Sarathi Mandal and
                  Kaushik Mondal},
  title        = {Black Hole Search by Scattered Agents on Time-Varying Dynamic Graphs},
  booktitle    = {Procs. of the 27th International Symposium on Stabilization, Safety, and Security of Distributed Systems (SSS)},
  pages        = {309--324},
  publisher    = {Springer},
  year         = {2025},
  doi          = {10.1007/978-3-032-11127-2_25}
}

@article{DobrevFPS07,
  author       = {Stefan Dobrev and
                  Paola Flocchini and
                  Giuseppe Prencipe and
                  Nicola Santoro},
  title        = {Mobile Search for a Black Hole in an Anonymous Ring},
  journal      = {Algorithmica},
  volume       = {48},
  number       = {1},
  pages        = {67--90},
  year         = {2007},
  doi          = {10.1007/S00453-006-1232-Z}
}

@inproceedings{ChalopinDS07,
  author       = {J{\'{e}}r{\'{e}}mie Chalopin and
                  Shantanu Das and
                  Nicola Santoro},
  title        = {Rendezvous of Mobile Agents in Unknown Graphs with Faulty Links},
  booktitle    = {Procs. of the 21st International Symposium on Distributed Computing (DISC)},
  pages        = {108--122},
  publisher    = {Springer},
  year         = {2007},
  doi          = {10.1007/978-3-540-75142-7_11}
}

@article{DobrevFPS06,
  author       = {Stefan Dobrev and
                  Paola Flocchini and
                  Giuseppe Prencipe and
                  Nicola Santoro},
  title        = {Searching for a black hole in arbitrary networks: optimal mobile agents
                  protocols},
  journal      = {Distributed Computing},
  volume       = {19},
  number       = {1},
  pages        = {1--35},
  year         = {2006},
  doi          = {10.1007/S00446-006-0154-Y}
}

@article{Fraenkel70,
    author = {Aviezri  S. Fraenkel},
    title = {Economic Traversal of Labyrinths},
    journal = {Mathematics Magazine},
    volume = {43},
    number = {3},
    pages = {125--130},
    year = {1970},
    publisher = {Taylor \& Francis},
    doi = {10.1080/0025570X.1970.11976025}
}

    \appendix
    \section{Proofs}\label{appendix:proofs}

\subsection{Proofs for $\Aperp$}\label{sec:proofs_Aperp}
\begin{appendixlemma}{\ref{lemma:SEperp_safe}}
    If a port $\port{v,w}$ is marked as {\SAFE} at round $t$, then $w$ is truly a safe node and $\port{v,w}$ permanently remains marked as {\SAFE} thereafter.
\end{appendixlemma}
\begin{proof}
    According to $\Aperp$, a port $\port{v,w}$ is marked as {\SAFE} as soon as an agent located in the ingoing sub-port $\inport{v,w}$ is activated. 
    In fact, by construction, an agent can be located on $\inport{v,w}$ only if it has traversed the channel $(w,v)$: this proves the fact that $w$ is safe.
    According to our algorithm, this mark will never be updated. 
\end{proof}

\begin{appendixlemma}{\ref{lemma:SEperp_dangerous}}
    If a port $\port{v,w}$ is marked as {\DANGEROUS} at round $t$ and node $w$ is truly a black hole, then such a port will be permanently marked as {\DANGEROUS}.
\end{appendixlemma}
\begin{proof}
    According to $\Aperp$, the only case in which a port $\port{v,w}$ marked as {\DANGEROUS} is then marked with another value occurs when an agent enters node $v$ through the ingoing port $\inport{v,w}$. In this case, the port will be marked as {\SAFE}.
    However, since $w$ is a black hole, this case can never happen.
\end{proof}

\begin{appendixlemma}{\ref{lemma:SEperp_false_dangerous}}
    If a port $\port{v,w}$ is marked as {\DANGEROUS} at round $t$ and node $w$ is safe and $\{v,w\}$ is recurrent, then $\port{v,w}$ will be eventually and permanently marked as {\SAFE}.
\end{appendixlemma}
\begin{proof}
    According to $\Aperp$, if an agent $a$ marks $\port{v,w}$ as {\DANGEROUS} and $\{v,w\}$ is recurrent, then, eventually, $a$ will travel along the edge $\{v,w\}$. 
    As soon as $a$ enters $w$ through $\inport{w,v}$, it understands $w$ is safe.
    $\Aperp$ makes an agent mark as {\SAFE} any {\UNEXPLORED} or {\DANGEROUS} port from which it has just entered (except for the ports on the black holes).
    If $\port{w,v}$ was marked as {\UNEXPLORED}, then $a$ positions itself on $\outport{w,v}$ and returns to $v$ along the same (recurrent) edge: as soon as it will go back to $\port{v,w}$, $a$ will mark this port as {\SAFE}.
    However, if $\port{w,v}$ was already marked as {\SAFE} or it was marked as {\DANGEROUS}, then it means that at least one agent has already moved or is going to move along the channel $(w,v)$. 
    Such agents will be responsible for marking $\port{v,w}$ as {\SAFE}.
    By \Cref{lemma:SEperp_safe}, the mark {\SAFE} on $\port{v,w}$ is permanent.
\end{proof}

\begin{appendixlemma}{\ref{lemma:always_agent_travels}}
    Consider a team of at least $2\eta' + \sum_{b\in B}\delta(b) +1$ agents executing $\Aperp$ on a network $\tmpG = (G=(V,E), \Lambda, \varrho, B)$. Then, for each round $t$, there exists a time $t'\geq t$ where at least one agent travels along an edge.
\end{appendixlemma}
\begin{proof}
    Let $t\in \NatO$. 
    At time $t$, at most $\sum_{b\in B}\delta(b)$ agents may be trapped in the black holes or located in the outgoing dangerous sub-ports of transient dangerous edges, waiting indefinitely for their reappearance; moreover, at most $2\eta'$ agents may be located at the $2\eta'$ outgoing sub-ports of the transient safe edges, also waiting in vain for their reappearance.
    Note, in fact, that the rotor-router mechanism ensures that at most one agent may be located on an outgoing sub-port at each time.
    Thus, at least one agent, say $a$, is located in some safe node, say $v$, not on the outgoing sub-port of a transient edge.
    Note that, since $\tmpG$ is temporally connected, each safe node has at least one incident edge which is both safe and recurrent.
    According to $\Aperp$, if $a$ is activated and there is a candidate port (i.e., not marked as {\DANGEROUS} and not occupied by other agents), then it moves to the corresponding outgoing sub-port.
    Eventually, either $a$ or one agent that occupies the recurrent outgoing sub-ports of $v$ will be activated when the corresponding edge is present, and it will travel along it (thanks to the eventual transportation condition).
    Instead, suppose all the ports of $v$ are marked as {\DANGEROUS} at time $t$.
    Among them, at least one port, say $\port{v,w}$, is truly safe and recurrent; thus there is an agent on $\port{w,v}$ that will eventually come back to $v$ through the port $\port{v,w}$ and will mark it as {\SAFE}.
    In any case, at least one agent travels along an edge at a time $t'\geq t$.
\end{proof}

\begin{appendixlemma}{\ref{lemma:recurrent_edges_visited_inf}}
    If a safe node $v$ is visited by an incoming agent an infinite number of times, then each safe node $w$ connected to $v$ through a recurrent edge will be visited an infinite number of times.
\end{appendixlemma}
\begin{proof}
    When an agent enters $v$ through a port $\inport{v,w}$, after possibly marking it as {\SAFE}, it has two options: \emph{(i)} go back through $\outport{v,w}$ to mark $\port{w,v}$ as {\SAFE}, or \emph{(ii)} use the rotor-router mechanism to find the next candidate port to be explored.
    Yet, note that \emph{(i)} can be done at most one time for a given port (i.e., only if the agent finds the port still {\UNEXPLORED}).
    Thus, after a finite number of visits, agents must use the rotor-router mechanism to choose the next node to visit.
    So, assume we always apply \emph{(ii)}, where $\rotor_v$ simply cycles over all the ports: in this case, the agent selects the first candidate port (i.e., empty from agents and not marked as {\DANGEROUS}) starting from the port pointed by $\rotor_v$.
    Thus, all the {\SAFE} and {\UNEXPLORED} ports will be explored at least once.
    In particular, all the dangerous {\UNEXPLORED} ports will be marked as {\DANGEROUS} and no longer explored.
    On the contrary, all the safe and recurrent {\UNEXPLORED} ports will be marked as {\DANGEROUS} for the first exploration, and then marked as {\SAFE} as soon as an agent comes back from the same port.
    All the safe and transient {\UNEXPLORED} ports will be marked as {\DANGEROUS}, and possibly they will remain {\DANGEROUS} forever.
    Thus, eventually, all the safe and recurrent ports of $v$ will be marked as {\SAFE}, and they will be cyclically visited by the agents.   
\end{proof}

\begin{appendixthm}{\ref{th:SE_perpetual}}
	$\algo^\infty$ ensures a team of at least $2\eta' + \sum_{b\in B}\delta(b) +1$ oblivious and anonymous agents arranged on the anonymous nodes of a temporally connected network 
	$\tmpG$ under the {\ssynch}-{\TRAN} model to solve $\SEperp$ without any prior knowledge of $\tmpG$.
    Eventually, $\algo^\infty$ provides a highly reliable port marking on $\tmpG$.
\end{appendixthm}
\begin{proof}
	By \Cref{lemma:always_agent_travels}, it follows that there exists an agent $a$ that will never be trapped by a black hole, and continues to move along recurrent edges. 
    This means that there exists at least one node that is visited infinitely often by $a$.
	By \Cref{lemma:recurrent_edges_visited_inf}, if $a$ reaches $v$ infinitely often, then it will travel along the adjacent safe and recurrent edges infinitely often.
	Thus, all the safe and recurrent adjacent nodes of $v$ will be visited infinitely often. 
    Recursively, since $\tmpG$ is temporally connected, each safe node (and each safe recurrent edge) will be visited infinitely often.
    
    Since an agent marks a port as {\SAFE} as soon as it enters it, it follows that all the ports of safe and recurrent edges will be marked as {\SAFE}.
    Moreover, the rotor-router mechanism ensures all the dangerous edges incident to $v$ are permanently marked as {\DANGEROUS}; while all the transient edges incident to $v$ are permanently marked as either {\SAFE} (when truly safe) or as {\DANGEROUS}.
    These claims, together with \Cref{lemma:SEperp_safe,lemma:SEperp_dangerous,lemma:SEperp_false_dangerous}, prove that the final port marking is highly reliable.
\end{proof}

\subsection{Proofs for $\Afini$}\label{sec:proofs_Afini}

\begin{appendixlemma}{\ref{lemma:SEfini_Exploration_finite}}
    Consider a team of at least $2\eta' + \sum_{b\in B}\delta(b) +1$ agents executing $\Afini$ on a network $\tmpG = (G=(V,E), \Lambda, \varrho, B)$. Then, the Exploration phase is finite and ensures that all safe nodes are visited at least once and that at least one agent remains free.
\end{appendixlemma}
\begin{proof}
    The Exploration phase exploits $\Aperp$ as a subroutine and ends as soon as one agent has collected all $s=|V\setminus B|$ node IDs and has set its $\agentstatus$ to {\PROPAGATION}.
    Since it uses $\Aperp$, we know that at least one agent will never be trapped in a black hole.
    Hence, we need to prove that there exists at least one agent that will collect all the node IDs.
    An agent $a$ cumulatively collects node IDs in two possible ways: \emph{(i)} by directly visiting a not-yet-named node or \emph{(ii)} by visiting an already-visited node $v$ and merging its own $\agentnodelist_a$ with $\nodelist_v$.
    By contradiction, let us suppose that there will never be an agent that collects all the node IDs.
    As a consequence, the Exploration phase will never end: agents will indefinitely execute $\Aperp$, and all the safe nodes will be explored infinitely often.
    This means that there will be a partition of $V(G)=\{V_1,V_2\}$ and a partition of $\agents=\{\agents_1, \agents_2\}$ such that the sub-team $\agents_1$ ($\agents_2$, resp.) will be forever visiting only the nodes in $V_1$ ($V_2$, resp.).
    However, the induced subgraphs of $G[V_1]$ and $G[V_2]$ are connected by at least one recurrent edge in $\tmpG$; otherwise $\tmpG$ would not be temporally connected.
    By \Cref{lemma:recurrent_edges_visited_inf}, we know that all the safe recurrent edges will be traveled infinitely often by executing $\Aperp$.
    Yet, this contradicts the hypothesis that there cannot exist a node in $\tmpG$ that will be visited by both $\agents_1$ and $\agents_2$.    
\end{proof}

\begin{appendixlemma}{\ref{lemma:SEfini_Propagation_finite}}
    The Propagation phase is finite, ensures that all safe nodes have $\nodestatus$ marked as {\EXPLCOMPLETE} and that at least one agent remains free.
\end{appendixlemma}
\begin{proof}
    The Propagation phase starts as soon as an agent sets $\agentstatus\gets \PROPAGATION$ and ends as soon as an agent has $s$ node IDs in its $\agentnodelist$ set to {\true}, and thus all the $s$ safe nodes have been marked as {\EXPLCOMPLETE} in their whiteboard.
    By \Cref{lemma:SEfini_Exploration_finite}, we know that this phase eventually starts; we have to prove that this phase stops only after all safe nodes have been visited at least once after the Exploration phase.
    According to $\Afini$, in this phase the free agents with $\agentstatus=\PROPAGATION$ will navigate on the network still using $\Aperp$ to set the $\EXPLCOMPLETE$ value in the whiteboard of each safe node.
    Since the agents have used and still use $\Aperp$ for navigating the network, we know by \Cref{lemma:SEperp_safe} that a {\SAFE}-marked port is truly safe: this ensures that the final port marking is at least weakly reliable.
    
    The use of $\Aperp$ ensures that at least one agent (those with $\agentstatus= \PROPAGATION$) will never fall into a black hole.
    With the same argument as in \Cref{lemma:SEfini_Exploration_finite}, we can claim that the free agents will mark all the $s$ nodes as {\EXPLCOMPLETE}, and they will collect the list of all the $s$ nodes which have been marked.
\end{proof}

\begin{appendixlemma}{\ref{lemma:SEfini_Termination_finite}}
    The Termination phase is finite and ensures that all free agents, of which are at least one, are at the center of a safe node.
\end{appendixlemma}
\begin{proof}
    The Termination phase starts as soon as a free agent sets $\agentstatus\gets \TERMINATION$; by \Cref{lemma:SEfini_Propagation_finite}, we know that eventually this phase starts with all the safe nodes marked as {\EXPLCOMPLETE}.
    So, as soon as a free agent is activated, it sees that the node in which it is positioned is marked as {\EXPLCOMPLETE} and contains the list of all the $s$ safe nodes marked as {\true}; so, it understands that it has to move to the center of the node and terminates.
    No other action will be performed.
\end{proof}

\begin{appendixthm}{\ref{th:SE_finite}}
	$\Afini$ ensures a team of at least $2\eta' + \sum_{b\in B}\delta(b) +1$ agents with IDs arranged on the anonymous nodes of a temporally connected network 
	$\tmpG$ under the {\ssynch}-{\TRAN} model to solve $\SEfini$ with explicit termination, only knowing $s=|V\setminus B|$.
    Eventually, $\Afini$ provides a \emph{weakly reliable} port marking on $\tmpG$.
\end{appendixthm}
\begin{proof}
    By \Cref{lemma:SEfini_Exploration_finite}, we know that all the nodes are visited at least once in finite time and that at least one agent remains free.
    By \Cref{lemma:SEfini_Propagation_finite}, we know that the information about the exploration termination is \emph{explicitly} propagated in all the $s$ safe nodes by writing the value {\EXPLCOMPLETE} on their whiteboards; even in this phase, at least one agent remains free.
    During the propagation, the free agents use---and possibly continue to expand---a spanning subgraph $\tmpGSAFE$ of $\tmpG$, which guarantees that each {\SAFE}-marked port is truly safe; thus, the final port marking is weakly reliable.
    However, $\tmpGSAFE$ may not be temporally connected (as proved in \Cref{th:impossible_reliable_PM}, some transient edges of $\tmpGSAFE$ may disappear after the propagation phase and leave some safe nodes isolated).
    By \Cref{lemma:SEfini_Termination_finite}, we know that eventually all the free agents will move permanently to the center of some safe nodes and set $\agentstatus$ to {\TERMINATION}.
    This establishes the explicit termination.
\end{proof}

\section{Pseudocodes, tables and figures}\label{appendix:pseudocodes}
\Cref{algo:LCM} shows the Look-Compute-Move cycle executed by each agent.
\Cref{algo:SE_perpetual} and \Cref{algo:SE_finite} present the pseudocodes of the algorithm $\Aperp$ for $\SEperp$ (explained in \Cref{sec:SE_perpetual}) and of the algorithm $\Afini$ solving $\SEfini$ (explained in \Cref{sec:SE_finite}), respectively.
\Cref{table:variables_SEfini} lists the variables used by $\Aperp$ and $\Afini$.

\begin{table}[ht]
    \centering
    \footnotesize{
    \renewcommand{\arraystretch}{1.3}{
    \begin{tabular}{|c|l|l|}
        \hline
        Memories                            & Variables & Usages\\
        \hline\hline
        \multirow{4}{*}{\bf $\wboard_v$}   
                                            & \cellcolor{gray!20}{$\ports_v$} & \cellcolor{gray!20}{Array in the form $\{\UNEXPLORED,\SAFE,  \DANGEROUS\}^{\delta(v)}$}\\
                                            &  \cellcolor{gray!20}{$\rotor_v$} & \cellcolor{gray!20}{Pointer containing an integer in $[0,\delta(v)-1]$ } \\
                                            &  $\nodeid_v$ & Contains a pair $\underline{x}=(a,j)$ with $j\in \NatO$ \\
                                            &  $\nodestatus_v$ & Blank or {\EXPLCOMPLETE} \\
                                            &  $\nodelist_v$ & List of entries in the form $(\underline{x},\texttt{bool})$ where \texttt{bool} is a boolean \\
        \hline
        \multirow{3}{*}{\bf $\nbook_a$}     & $\agentstatus_a$  & Contains a value in $\{\EXPLORATION, \PROPAGATION, \TERMINATION\}$\\
                                            & $\agentnodelist_a$  & As $\nodelist_v$\\
                                            & $\counter_a$  & Contains an integer\\
        \hline
    \end{tabular}
    }}
    \caption{Variables used in $\Afini$. The gray-colored variables are also adopted in $\Aperp$.}
    \label{table:variables_SEfini}
\end{table}

\begin{algorithm}[th]
\caption{LCM cycles executed by the agents in $\AS(t)$.}
\label{algo:LCM}
\small

    $a_{i_1},\dots, a_{i_{h(t)}} \gets \AS(t)$\;

        $a \gets a_{i_j}$\;
        \tcc{Executed in parallel by all the agents in $\AS(t)$}
        \MyBlock{Look}{
            $\sigma\gets \langle pos(a), \nbook(a)\rangle$ \;
    	}
        \tcc{Executed in sequence by all the agents in $\AS(t)$}
        \ForEach{$a_{i_h}$ in $\AS(t)$}{

        \MyBlock{Compute}{
            $\sigma\gets \sigma + \langle \wboard_v, \phi_v \rangle$ \;
    		$(p, \chi) \gets \algo(\sigma)$\;
    		\If{$\chi \neq \NaN$}{
    			$\wboard_v \gets \chi$\;
    		}
            \eIf{$p \in [0,\delta(v)-1]$}{
    			
    				$a$ moves to the outgoing port $p$\;
    			
    		}{
            \If {$p =\cnode$}{
    				$a$ moves to the center of $v$\;
    		}
            }
        }
    }
    \tcc{Executed in parallel by all the agents in $\AS(t)$}
    \MyBlock{Move}{
        \If{$a$ is located on an outgoing port $p$}{
			$e \gets $ edge at port $p$\;
            \If{$\varrho(e,t) =1$}{
				$a$ moves along $e$ and reaches the other node\;
			}
        }
    }
\end{algorithm}
\begin{figure}[ht]
    \centering
     \begin{tikzpicture}[scale=0.7,node distance=4cm, on grid,transform shape, align=center, state/.style={ellipse, draw, minimum width=3.2cm,
        minimum height=1.4cm}]   
          \node[state,initial, fill=yellow!20]  (U)   {$\UNEXPLORED$};
          \node[state,fill=red!20]   (D)  [below left=of U]  {$\DANGEROUS$};
          \node[state,fill=green!20]  (S)  [below right=of U]  {$\SAFE$};
        
          \path[->, shorten >=0.05cm] (U)     edge [bend right] node [left]           {$a$ tries the ingoing port}  (D)
                            edge [bend left] node [right]                    {$a$ reaches the outgoing port}  (S)      
                    (D)     edge [left] node [above]          {$a$ comes back}  (S)
                            edge [loop left]  node {$a$ is trapped in a black hole}  (D)
                    (S) edge [loop right] node {} (S);
        \end{tikzpicture}
    \caption{Diagramm for the Cautious Walk technique.}
    \label{fig:DFA_cautious_walk}
\end{figure}

\begin{minipage}[t]{0.42\textwidth}
    \begin{algorithm}[H]
    \footnotesize
      \SetKwFunction{isempty}{is\_empty}
    	\SetKwProg{Fn}{Function}{ is}{end}
    	\Fn{\isempty{x: int} : int}{ 
    		$w \gets \lambda^{-1}_v(x)$\;
    		\If{$\exists a\in \agents \text{ on } \inport{v,w} \text{ or } \outport{v,w}$}{
    			\KwRet 0\;
    		}
    		\KwRet 1\;
    	}
        \caption{Utility to check if a port is empty.}
    \end{algorithm}  
\end{minipage}
\hfill
\begin{minipage}[t]{0.56\textwidth}
    \begin{algorithm}[H]
    \footnotesize
      \SetKwFunction{nextcandidate}{next\_candidate}
    	\SetKwProg{Fn}{Function}{ is}{end}
    	\Fn{\nextcandidate{} : int}{ 
    		\ForEach{$i\in [0, \delta_v-1]$}{
    			$j \gets (\rotor_v + i) \mod \delta_v$\;
    			\If{$\ports_v[j] \neq \DANGEROUS$ and \isempty{j}}{
    				\KwRet $j$\;
    			}
    		}
    		\KwRet \NaN\;
    	}
        \caption{Utility for computing the next candidate port to traverse.}
        \label{algo:next_candidate}
    \end{algorithm} 
\end{minipage}
\begin{algorithm}[th]
    \caption{Algorithm $\Aperp$ solving $\SEperp$.}
    \label{algo:SE_perpetual}
    \small
    \tcp{{Compute} step of an agent $a$ activated at a node $v$.}
    $pos(a)\gets$ relative position of $a$ within $v$\;
        \uIf{$pos(a)= \cnode$}{
            $j\gets \nextcandidate{} $\;
             \If{$j \neq \NaN$}{
                $\rotor_v \gets j+1 \mod \delta_v$\;
                $w\gets \lambda^{-1}(j)$\;
                $a$ moves to $\outport{v,w}$\;
                \If{$\ports_v[j] = \UNEXPLORED$}{
                    $\ports_v[j] \gets \DANGEROUS$\;
                }
             }
        }\uElseIf{$pos(a) = \inport{v,w}$}{
            \eIf{$\ports_v[\lambda_v(w)] = \SAFE$}{
                $a$ moves to $\cnode$\;
            }{ 
                \eIf{$\ports_v[\lambda_v(w)] = \UNEXPLORED$}{
                    $a$ moves to $\outport{v,w}$\;
                }{ 
                    $a$ moves to $\cnode$\;
                }
                $\ports_v[\lambda_v(w)] \gets \SAFE$\;
            }
        }\Else{
        \tcp{If $pos(a) = \outport{v,w}$, $a$ does nothing.}
        } 
    
\end{algorithm} 
\begin{algorithm}[th]
    \caption{Algortihm $\Afini$ solving $\SEfini$.}
    \label{algo:SE_finite}
    \small
    \tcp{{Compute} step of an agent $a$ activated at a node $v$.}
	$s,a$ are the parameters of the algorithm\;
    \Switch{$\nodestatus_v$}{
        \Case{$\EXPLORATION$}{
            \eIf{$\nodestatus_v \neq \EXPLCOMPLETE$}{
                \If{$\nodeid_v$ is $\NaN$}{
                    $j \gets \counter_v$\;
                    $\nodeid_v \gets (a,j)$\;
                    $\counter_v++$\;
                    $\agentnodelist_a \gets \agentnodelist_a \cup \{(a,j,\false)\}$\;
                }
                $\ell \gets \nodelist_v \cup \agentnodelist_a$\;
                $\agentnodelist_a \gets \ell$\;
                $\nodelist \gets \ell$\;
                \eIf{$len(\ell) = s$}{
                    $\agentstatus_a \gets \PROPAGATION$\;

                }{ 
                    executes $\Aperp$\;
                }
            }{ 
                $\agentstatus_a \gets \PROPAGATION$\;
            }
            \Return\;
        }
        \Case{$\PROPAGATION$}{
            $\nodestatus_v \gets \EXPLCOMPLETE$\;
            $nid \gets \nodeid_v$\;
            sets $\true$ the entry of $nid$ in $\agentnodelist_a$\;
            $\ell \gets []$\;
            \ForEach{$(\underline{x}, bool) \in \agentnodelist_a$}{
                \eIf{$\exists (\underline{x}, bool') \in \nodelist_v$}{
                    $\ell \gets \ell.append((\underline{x}, bool \lor bool'))$\;
                }{ 
                    $\ell \gets \ell.append((\underline{x}, bool))$\;
                }
            }
            $\agentnodelist_a\gets \ell$\;
            $\nodelist_a\gets \ell$\;
            
            \eIf{$\agentnodelist_a$ has $s$ {\true} entries}{
                $\agentstatus_a\gets \TERMINATION$\;
            }{
                executes $\Aperp$\;
            }
            \Return\;
        }
        \Case{$\TERMINATION$}{
            $a$ moves to $\cnode$ of $v$\;
            \Return\;
        }
    }   
\end{algorithm}

\end{document}